\newtheorem{theorem}{Theorem}
\newtheorem{definition}{Definition}
\newtheorem{lemma}{Lemma}
\newtheorem{assumption}{Assumption}
\title{
String Stability of a Vehicular Platoon\\ with the use of Macroscopic Information
}
\author{Marco Mirabilio, Alessio Iovine, Elena De Santis, Maria Domenica Di Benedetto, Giordano Pola
\thanks{Marco Mirabilio, Elena De Santis, Maria Domenica Di Benedetto, Giordano Pola are with the Department of Information Engineering, Computer Science and
Mathematics, Center of Excellence DEWS, University of L’Aquila. (e-mail: marco.mirabilio@graduate.univaq.it, \{elena.desantis,mariadomenica.dibenedetto,giordano.pola\}@univaq.it)
}%
\thanks{Alessio Iovine is with the Electrical Engineering and Computer Sciences (EECS) Department at UC Berkeley, Berkeley, USA. E-mail: alessio@eecs.berkeley.edu, alessio.iovine@ieee.org.
}%
}
\begin{document}

\maketitle

\begin{abstract}
    The paper investigates the possibility of using macroscopic information to improve control performance of a vehicular platoon composed of autonomous vehicles. A mesoscopic traffic modeling framework is proposed, and closed loop String Stability is achieved by exploiting Input-to-State Stability (ISS) for the analysis of the platoon. Examples of mesoscopic control laws are illustrated and simulations show the effectiveness of our approach in ensuring String Stability and improving the platoon behavior.
\end{abstract}

\begin{IEEEkeywords}
    String Stability, Input-to-State Stability, platoon control, mesoscopic modeling, Cooperative Adaptive Cruise Control, macroscopic information.
\end{IEEEkeywords}
\vspace{-10pt}

\section{Introduction}

Interconnected autonomous vehicles have the capability to reduce stop-and-go waves propagation and traffic oscillations via the concept of String Stability  \cite{Feng2019ARC}, \cite{Monache2019CDC},  \cite{Piccoli2018TRC},  \cite{DeSantis2006},  \cite{Swaroop1996StringStability}, \cite{Ioannou1993ACC}. This concept relies on the idea that disturbances acting on an agent of the cluster should not amplify backwards in the string. Although String Stability is a property proven for the overall set of agents, in vehicular scenario traffic jamming transitions strongly depend on the amplitude of fluctuations of the leading vehicle \cite{Nagatami2000PRE}. In the case of vehicular platooning, disturbances may be due to reference speed variation, external inputs acting on each vehicle, wrong modeling, etc. To improve the platoon stability, several cases of information sharing have been considered for each leader-follower interaction, but a common characteristic is that some microscopic variables are always shared among the whole platoon, e.g. the acceleration of the platoon's leading vehicle (see \cite{Swaroop1996StringStability}) or its desired speed profile (see \cite{Besselink2017TAC}). Indeed, Vehicle-to-Infrastucture (V2I) and Vehicle-to-Vehicle (V2V) communication technologies are nowadays a reality in the smart transportation domain \cite{Uhlemann2017VTM}, and their utilization in Cooperative Adaptive Cruise Control (CACC) is widely expected  to improve traffic conditions.

This paper analyses the benefits of macroscopic information propagation in a platoon with the goal of obtaining String Stability. With respect to the current literature, the macroscopic information sharing replaces the microscopic variable that is shared among the whole platoon. The utilisation of macroscopic information for each car-following situation leads to a mesoscopic dynamical model based on a bottom-up approach. Mesoscopic models are already known in the literature \cite{li_ioannou_2004}, but usually they are used to incorporate microscopic information in macroscopic models of the traffic flow in a top-down approach, where complex Partial Differential Equations (PDEs) or discretized versions are considered  \cite{Monache2019CDC}. The goal of such a top-down modeling is to analyze the effects of the control strategies on the traffic flow. The idea of using macroscopic quantities, mainly the density, for microscopic traffic control has already been introduced in the literature, e.g. in \cite{Ioannou1997Aut}, \cite{treiber_2013_book}, \cite{Zhu2016TITS}. In \cite{treiber_2013_book}, \cite{Zhu2016TITS} and in the references therein, the focus is on simulation aspects and real data analysis. Several works are now focusing on a mesoscopic modeling for traffic control purposes. However, their target is to use mesoscopic modeling to stabilize the traffic flow considering a macroscopic description (see \cite{li_ioannou_2004},  \cite{Johansson2019ECC}, \cite{Ferrara2018ITSC}, \cite{DeSchutter2017TRC}).

The proposed approach introduces macroscopic information in a microscopic framework for improving local control strategies. It allows to obtain a model that is based on Ordinary Differential Equations (ODEs) and to provide a rigorous stability result. To this purpose, a state variable describing the macroscopic information in an aggregate formalism is defined. Such single state variable  provides the full amount of information that is needed, while avoiding state dimension explosion. The idea to exploit macroscopic information in a microscopic framework with a bottom-up approach is not new, and can be found for example in \cite{SwaroopHindra98ICCwithDensity}, \cite{Iovine2015NAHS} and \cite{Johansson2019ShockWaves}. However, no formal String Stability analysis is performed (see \cite{Iovine2015NAHS}, \cite{Johansson2019ShockWaves}), or such analysis is based on a linearisation of the model (see \cite{SwaroopHindra98ICCwithDensity}). The main advantage of using macroscopic information in a bottom-up approach consists in the possibility of performing a theoretical String Stability analysis while minimising the system complexity. 

In this paper, each follower is assumed to correctly measure the distance and speed of its leading vehicle, using for example radar and LIDAR. The leader acceleration is communicated only to its follower. To improve control performance, macroscopic information is supposed to be obtained either from the road infrastructure (V2I) or from the whole platoon (V2V). Both technologies have strengths and weaknesses. For example, V2V technology requires the macroscopic information to be propagated through the vehicles, and possibly estimated in a distributed manner by each one. On the other hand, V2I technology may provide a more reliable information at the cost of allocating several sensors along the way and computing the quantities in a centralized manner, implying a high computation request to the central computer. A thorough analysis of pros and cons of the two communication typologies is out of the scope of the present paper. We target a platoon composed by autonomous vehicles implementing CACC, but the framework is suitable for including autonomous vehicles implementing simple ACC or even human-driven vehicles as part of the platoon. 

The class of mesoscopic controllers we propose considers macroscopic information to ensure Asymptotic String Stability in a microscopic modeling of the traffic flow. The macroscopic information is viewed as a disturbance acting on the microscopic leader-follower situation, and an Input-to-State Stability (ISS) property is used to prove String Stability \cite{Sontag1989TAC}, \cite{Sontag2008}. The proposed approach is general and applied to two different spacing policies (see \cite{SWAROOP1994VSD}), a constant spacing policy and a time-varying one. In both cases, perturbation effects along the platoon are reduced. Similarly to \cite{Besselink2017TAC}, our result is obtained through an inductive method exploiting ISS. The main difference is that here ISS is ensured with respect to all the ahead leader-follower pairs. This work extends to a general class of control systems the preliminary results introduced in \cite{Mirabilio2020CDC}, that were developed for a specific case study. Simulations show the positive effects of the utilization of macroscopic information on the whole traffic platoon. An anticipatory behaviour is generated, producing a reduction in the oscillations magnitude. The filtering of the response to rapidly accelerating lead vehicles is amplified, and the traffic flow turns out to be smoother due to a better transient harmonization. 

The paper is organized as follows. Section \ref{sct:modeling} introduces the considered framework, and Section \ref{sct:control_tools} the needed control tools. Stability of a mesoscopic closed loop system is analyzed in Section \ref{sct:mesoscopic_string_stability}. Mesoscopic control laws are provided in Section \ref{sct:mesoscopic_control}, together with stability analysis. Simulations are carried out in Section  \ref{sct:simulations} and conclusions are outlined in Section \ref{sct:conclusion}.

\vspace{0.2cm}
\textbf{Notation - } $\mathbb{R}^+$ is the set of non-negative real numbers. For a vector $x\in\mathbb{R}^n$, $|x|=\sqrt{x^T x}$ is its Euclidean norm (i.e. $|x|_2$). If a different norm is used, it is indicated by a subscript (e.g. $|x|_p$ denotes the generic $p$ norm). The $\mathcal{L}_\infty$ signal norm is defined as $|x(\cdot)|_\infty^{[t_0,t]} = \sup_{t_0\leq\tau\leq t}|x(\tau)|$. The $p=\infty$ norm of a vector is denoted by $|x|_\infty = \max_{i=1...n}|x_i|$. We refer to \cite{B_khalil_2002} for the definition of Lyapunov functions, and functions $\mathcal{K}$, $\mathcal{K}_\infty$ and  $\mathcal{KL}$.

\section{Modeling and String Stability definitions}\label{sct:modeling}

\subsection{Platoon modeling}

We consider a cluster of $N+1$ vehicles, $N\in\mathbb{N}$, proceeding in the same direction on a single lane road, as in Fig.\ref{fig:reference_framework}. We make the following 
\begin{assumption}
    All the vehicles are equal, with the same length $L\in\mathbb{R}^+$ and have the common goal of maintaining a strictly positive distance among them, while keeping the same speed.
\end{assumption}
We denote with $i = 0$ the first vehicle of the platoon and with $\mathcal{I}_N = \{1,2,...,N\}$ the set of follower vehicles. The set including all the vehicles is $\mathcal{I}_N^0 = \mathcal{I}_N \cup \{0\}$.

\begin{figure}
    \centering
    \includegraphics[width = 1\columnwidth]{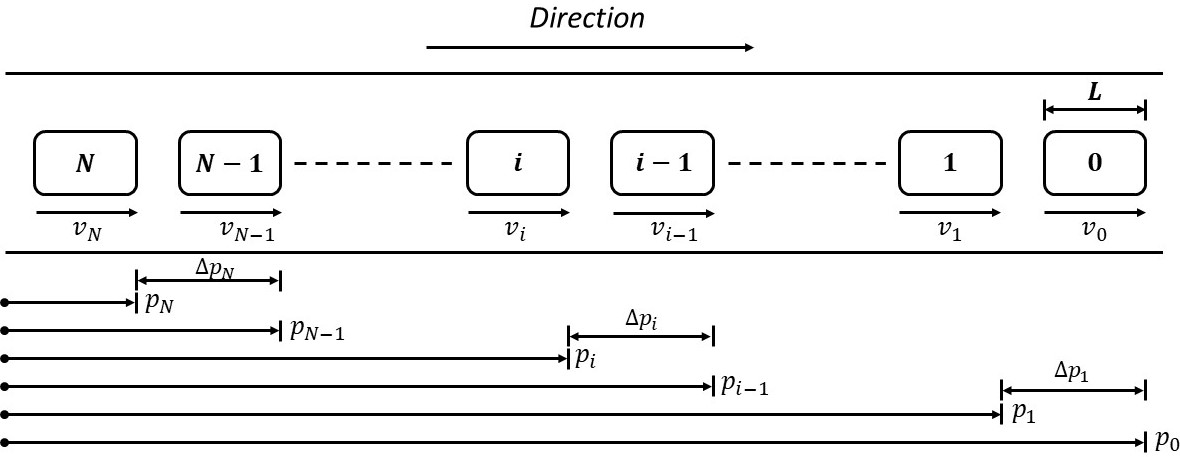}
    \caption{Reference framework.}
    \label{fig:reference_framework}
    \vspace{-10pt}
\end{figure}

\begin{color}{black}

Similarly to \cite{Besselink2017TAC}, each vehicle $i\in\mathcal{I}_N^0$ is assumed to satisfy the following generic longitudinal dynamics:
\begin{equation}\label{eq:generic_longitudinal_dynamics}
    \begin{array}{l}
        \dot{p}_i = f_p(\xi_i) \\
        \dot{\xi}_i = f_\xi(\xi_i)+g_\xi(\xi_i)u_i
    \end{array}
\end{equation}
where $p_i\in\mathbb{R}^+$ is the position of vehicle $i$, $v_i=\dot{p}_i$ ($0< v_i\leq v_{\max}, \ v_{\max}\in\mathbb{R}^+$) is its velocity and $u_i \in \mathbb{R}$ is the vehicle control input. Variable $\xi_i\in\mathbb{R}^{n-1}$ represents the remaining dynamics of the vehicle, such as actuators dynamics. The relation $v_i = \dot{p}_i = f_p(\xi_i)$, where $f_p : \mathbb{R}^{n-1} \rightarrow \mathbb{R}$, describes the interconnection between the longitudinal dynamics, velocity and position of the vehicle, with the remaining low level dynamics of the powertrain, that are described by $f_\xi(\xi_i)$ and $g_\xi(\xi_i)$, where $f_\xi : \mathbb{R}^{n-1} \rightarrow \mathbb{R}^{n-1}$ and $g_\xi : \mathbb{R}^{n-1} \rightarrow \mathbb{R}^{n-1}$. According to \cite{Swaroop1996StringStability} and \cite{Xiao2011StringStability}, the dynamics in (\ref{eq:generic_longitudinal_dynamics}) can be simplified without loss of generality in:
\begin{equation}\label{eq:longitudinal_dynamics}
    \begin{array}{l}
         \dot{p}_i = v_i, \:\:  \dot{v}_i = u_i
    \end{array}
\end{equation}
where $\xi_i=v_i$, the functions $f_p(\xi_i)=v_i$, $f_\xi(\xi_i)=0$ and $g_\xi(\xi_i)=1$; $u_i$ is the input representing both the acceleration and brake actions of vehicle $i$ ($|u_i|\leq a_{\max}, a_{\max}\in\mathbb{R}^+$). Although the double integrator model is relatively simple for robustness analysis with respect to uncertainties, it is classically used to analyse heterogeneous platoons \cite{Swaroop2019Review}. The nonlinearities and uncertainties due to the vehicle powertrain dynamics, that are strictly linked with the mechanical components, require the use of techniques such as inversion model and feedback linearisation that allow reducing the vehicle dynamic model complexity, as shown in \cite{FengGao2020HInfController} and \cite{FengGao2020RobustDistrConsensus}. For this reason, in the present work we consider the nominal model in (\ref{eq:longitudinal_dynamics}) to illustrate our approach. 
\end{color}

To provide a global description of the platoon, we adopt the leader-follower model that describes the inter-vehicular interaction  (see \cite{Ploeg2014TCST}). We define the state of each vehicle $i\in\mathcal{I}_N^0$ as
\begin{equation}\label{eq:state_vector_x}
    x_i = \left[ p_i \ \ v_i \right]^T
\end{equation}
and the state of each car-following situation among the leading vehicle $i-1$ and the following one $i$ as
\begin{equation}\label{eq:state_vector_chi}
    \chi_i = x_i-x_{i-1} 
    = \left[
        \begin{array}{c}
            \Delta p_i    \\
            \Delta v_i
        \end{array}
    \right] =  \left[
        \begin{array}{c}
            p_i-p_{i-1} \\
            v_i-v_{i-1}
        \end{array}
    \right].
\end{equation}
Positions, speed and acceleration of each leading vehicle are supposed to be perfectly known, either measured or communicated to the following one. Consequently, the obtained dynamical model is:
\begin{equation}\label{eq:car_following_dynamics_i}
   \dot{\chi}_i = \left[
    \begin{array}{c}
      \Delta \dot{p}_i    \\
      \Delta \dot{v}_i
    \end{array}
    \right] =  \left[
    \begin{array}{c}
         \Delta {v}_i \\
         u_i-u_{i-1}
    \end{array}
   \right], \quad i\in\mathcal{I}_N,
\end{equation}
or, in short,
\begin{align}\label{eq:car_following_dynamics_i_short}
    \dot{\chi}_i  = f(\chi_i,u_i,u_{i-1}), \quad i\in\mathcal{I}_N.
\end{align}
Equation (\ref{eq:car_following_dynamics_i_short}) shows that the car-following dynamics are independent of the position in the platoon. Indeed, the function $f:\mathbb{R}^2 \times \mathbb{R} \times \mathbb{R} \rightarrow \mathbb{R}^2$ is the same for each vehicle of the set $\mathcal{I}_N$. To derive the dynamics of the first vehicle $i=0$ of the platoon in the same form of (\ref{eq:car_following_dynamics_i}), a non-autonomous non-communicating  virtual leader $i=-1$ is considered to precede the set of vehicles, with dynamical model
\begin{equation}\label{eq:virtual_leader}
    \dot{x}_{-1} = \left[
    \begin{array}{c}
         \dot{p}_{-1}  \\
         \dot{v}_{-1} 
    \end{array}
    \right] = \left[
    \begin{array}{c}
         v_{-1}  \\
         u_{-1}
    \end{array}
    \right].
\end{equation}
Then, the car-following dynamics with respect to vehicle $i=0$ can be described by:
\begin{equation}\label{eq:car_following_dynamics_0}
   \dot{\chi}_0 = \left[
    \begin{array}{c}
      \Delta \dot{p}_0    \\
      \Delta \dot{v}_0
    \end{array}
    \right] =  \left[
    \begin{array}{c}
         \Delta {v}_0 \\
         u_0-u_{-1}
    \end{array}
   \right].
\end{equation}
It follows that dynamics in (\ref{eq:car_following_dynamics_i_short}) is valid $\forall \  i\in\mathcal{I}_N^0$. Since we consider $i=-1$ to represent a virtual vehicle, $p_{-1}(t) = \int_0^t v_{-1}(\tau)d\tau, \ t\geq 0,$ is a dummy state. Moreover, we assume $\Delta p_0(t) = -\Delta\Bar{p}, \ \forall t\geq 0$, where $\Delta\Bar{p} > 0$ is the constant desired inter-vehicular distance. Since $p_i<p_{i-1}$, the desired distance has to be negative.
In accordance to \cite{Ploeg2014TCST}, \cite{ZhengBorrelli2017TCST}, \cite{Zheng2014ITSC}, we consider the  widely accepted hypothesis of constant speed for the virtual leader $i=-1$, that precedes the entire cluster. Then, we have
\begin{equation}
    p_{-1}(t) = \Bar{v}\cdot t, \ v_{-1}(t) = \Bar{v}, \ u_{-1}(t) = 0, \ \forall \ t\geq 0,
\end{equation}
where $\Bar{v}>0$ is a constant speed. 
For vehicular platoons, the constant speed assumption defines the equilibrium point for all the vehicles in the cluster. Consequently, when all $i\in\mathcal{I}_N^0$ have equal speed and are at the same desired distance $\Delta\Bar{p}$, the equilibrium point for the $i$-th system of dynamics (\ref{eq:car_following_dynamics_i_short}) is
\begin{equation}\label{eq:chi_equilibrium}
   \chi_{e,i} = \Bar{\chi} = [ -\Delta\Bar{p} \ \ \ 0 \ ]^T.
\end{equation}
From a global point of view, we define the platoon lumped state and the corresponding equilibrium point for $u_{-1} = 0$ as
\begin{equation}\label{eq:platoon_state_and_equilibrium}
    \chi=[\chi_0^T \ \chi_1^T \ ... \ \chi_N^T]^T, \quad \chi_e = [ \Bar{\chi}^T \ \Bar{\chi}^T \ ... \ \Bar{\chi}^T]^T.
\end{equation}
%

\subsection{String Stability definitions}\label{sct:string_stability}

Let the model in (\ref{eq:car_following_dynamics_i}) describe the $i$-th vehicle of the platoon, and its equilibrium be (\ref{eq:chi_equilibrium}). The control input $u_i$ is generated by the following dynamic controller
\begin{equation}\label{eq:dynamic_control_law}
    \begin{cases}
        \dot{\rho}_i = \omega_i( \rho_i , \chi  ) \\
        u_i = h_i( \rho_i , \chi , \chi_{e,i} , u_{i-1} )
    \end{cases}
\end{equation}
where $\rho_i \in \mathbb{R}^r$ is the state vector of the dynamic controller with dimension $r \geq 1$; $\omega_i : \mathbb{R}^r \times \mathbb{R}^{2N} \rightarrow \mathbb{R}^r$ is the vector field describing the evolution of the controller state; $h_i : \mathbb{R}^r \times \mathbb{R}^{2N} \times \mathbb{R}^2 \times \mathbb{R} \rightarrow \mathbb{R}$ is the output function which outcome is the control input $u_i$. The dynamical system (\ref{eq:dynamic_control_law}) takes as inputs $\chi, \ \chi_{e,i}, \ u_{i-1}$; these quantities are supposed to be shared through the adopted communication technology either via V2I or V2V. The resulting closed loop system is denoted in the sequel by $P_{cl}$, where the state vector of vehicle $i$ and the corresponding equilibrium point are, respectively,
\begin{equation}\label{eq:chi_extended_and_equilibrium}
    \Hat{\chi}_i = [ \: \chi_i^T \:\: \rho_i^T \: ]^T, \:\: \Hat{\chi}_{e,i} = [ \: \Bar{\chi}^T \:\: 0_{r}^T \: ]^T, \: \forall \: i \in \mathcal{I}_N^0,
\end{equation}
with $0_r\in\mathbb{R}^r$ a null column vector. 

We now recall the notions of String Stability and Asymptotic String Stability from \cite{Swaroop1996StringStability} and \cite{Besselink2017TAC}.
\begin{definition}{(String Stability)}\label{def:TSS_string_stability}
    The equilibrium $\Hat{\chi}_{e,i}, \: i\in\mathcal{I}_N^0$, of $P_{cl}$ is said to be String Stable if, for any $\epsilon>0$, there exists $\delta>0$ such that, for all $N\in\mathbb{N}$,
    \begin{equation}
        \max_{i\in\mathcal{I}_N^0}|\Hat{\chi}_i(0)-\Hat{\chi}_{e,i}|<\delta \Rightarrow \max_{i\in\mathcal{I}_N^0}|\Hat{\chi}_i(t)-\Hat{\chi}_{e,i}|<\epsilon, \ \forall \ t\geq 0.
    \end{equation}
\end{definition}
\begin{definition}{(Asymptotic String Stability)}\label{def:ATSS_string_stability}
    The equilibrium $\Hat{\chi}_{e,i}, \ i\in\mathcal{I}_N^0$, of $P_{cl}$ is said to be Asymptotically String Stable if it is String Stable and, for all $N\in\mathbb{N}$,
    \begin{equation}
        \lim_{t\rightarrow\infty}|\Hat{\chi}_i(t)-\Hat{\chi}_{e,i}|=0, \quad \forall \ i\in\mathcal{I}_N^0.
    \end{equation}
\end{definition}
%

\section{Control tools}\label{sct:control_tools}

The goal of this paper is to design controllers as in (\ref{eq:dynamic_control_law}) that adopt mesoscopic quantities for ensuring String Stability of $P_{cl}$. To this purpose, proper spacing policies and a function describing macroscopic information are now introduced. 

\subsection{Spacing policies}\label{sct:policies}

Several spacing policies have been introduced in the literature (e.g. \cite{Swaroop1999DynSystCont}, \cite{Hedrick2004strings}). We consider two of the most investigated ones: a constant spacing policy and a variable time spacing policy. Our goal is to slightly modify the reference trajectories for a better transient harmonization when traffic conditions vary, without changing the platoon equilibrium in (\ref{eq:platoon_state_and_equilibrium}) in steady-state.
    
\subsubsection{Constant spacing policy} 

A constant spacing strategy consists in tracking a constant desired inter-vehicular distance. The reference distance is defined as
\begin{equation}\label{eq:constant_spacing_policy}
    \Delta p_i^r(t) = -\Delta\Bar{p}, \:\:\forall\: i\in\mathcal{I}_N^0
\end{equation}
where $\Delta \Bar{p}>0$ is a constant value. Usually, the main drawback of this policy is that it does not guarantee String Stability if the control input of vehicle $i$ exploits only measurements with respect to its preceding vehicle $i-1$ (see \cite{Hedrick2004strings}). 

\subsubsection{Variable time spacing policy} 

A variable time spacing policy consists in tracking a variable inter-vehicular desired distance. This approach  allows  for  a  lower inter-vehicular  spacing  at  steady-state  conditions  (see \cite{Yanakiev1995CDC} and \cite{Zhang2005CDC}). We define a \textit{mesoscopic} time varying trajectory for the distance policy $\Delta p^r_i$ of the $i$-th vehicle with respect to its leader $i-1$:
\begin{equation}\label{eq:mesoscopic_spacing_policy}
    \Delta p_{i}^r(t) = -\Delta\Bar{p} - \rho^M_i(t), \ t\geq 0
\end{equation}
where $\Delta\Bar{p}>0$ is a constant inter-vehicular distance and $\rho^M_i(t)$ is a function describing macroscopic information. 

\subsection{Macroscopic information}\label{sct:macroscopic_info}

In this subsection, on the basis of the analysis carried out in \cite{treiber_2013_book}, we define appropriate macroscopic functions taking into account microscopic distance and speed variance. 

Given the generic vehicle $i\in\mathcal{I}_N$, let $\mu_{\Delta p,i}$ and $\sigma^2_{\Delta p,i}$ be the inter-vehicular distance mean and variance computed from vehicle $0$ to vehicle $i$, respectively:
\begin{equation}\label{eq:meanvar_distance}
    \mu_{\Delta p,i} = \frac{1}{i+1}\sum_{j = 0}^{i}\Delta p_j , \quad
    \sigma_{\Delta p,i}^2 = \frac{1}{i+1}\sum_{j = 0}^{i}(\Delta p_j-\mu_{\Delta p_i})^2.
\end{equation}
Let $\mu_{\Delta v,i}$ and $\sigma^2_{\Delta v,i}$ be the velocity tracking error mean and variance computed from vehicle $0$ to vehicle $i$, respectively:
\begin{equation}\label{eq:meanvar_speederror}
    \mu_{\Delta v,i} = \frac{1}{i+1}\sum_{j = 0}^{i}\Delta v_j, \quad
    \sigma_{\Delta v,i}^2 = \frac{1}{i+1}\sum_{j = 0}^{i}(\Delta v_j-\mu_{\Delta v_i})^2 .
\end{equation}
We denote by 
\begin{align}\label{eq:fmacro_pv}
    \psi^i_{\Delta p} : \mathbb{R}\times\mathbb{R}^+ \rightarrow\mathbb{R}, \:\: \psi^i_{\Delta v} : \mathbb{R}\times\mathbb{R}^+ \rightarrow\mathbb{R},
\end{align}
the distance macroscopic function and the speed tracking error macroscopic function, respectively.

The scope of functions $\psi^i_{\Delta p}$ and $\psi^i_{\Delta v}$ is to provide the $i$-th vehicle with the macroscopic information about the platoon system in an aggregate way. Since these functions are based on the variances of the microscopic and speed differences, they represent the state of the vehicles ahead, indicating the existence of a transient or a steady state phase. Consequently, the case $\psi^i_{\Delta p} = \psi^i_{\Delta v} = 0$ means that the vehicles up to the $i$-$th$ one have no oscillations. Instead of considering the whole set of leader-follower situations, a complexity reduction of the considered interconnection framework is obtained without reducing the level of available information. We embed the macroscopic information given by $\psi^i_{\Delta p}$ and $\psi^i_{\Delta v}$ in the macroscopic function $\rho_i$, the evolution of which is given by the controller dynamics in (\ref{eq:dynamic_control_law}). In order to describe the evolution of the controller state and of its interconnection with the macroscopic information, we propose the following dynamical system:
\begin{equation}\label{eq:rho_dynamic_system}
    \begin{cases}
        \dot{\rho}_i = \Lambda \rho_i + G_{\rho} [ \: \psi^{i-1}_{\Delta p} \:\: \psi^{i-1}_{\Delta v} \: ]^T \\
        \rho_i(0) = \rho_0
    \end{cases}
\end{equation}
where $\Lambda\in\mathbb{R}^{r \times r}$ is an upper triangular matrix which diagonal elements are strictly negative, so that the state trajectory of (\ref{eq:rho_dynamic_system}) is asymptotically stable and $\rho_i \rightarrow 0$ as $\psi^i_{\Delta p} \rightarrow 0$ and $\psi^i_{\Delta v} \rightarrow  0$; $G_{\rho}\in\mathbb{R}^{r \times 2}$ is the input matrix; $\rho_0\in\mathbb{R}^r$ is the initial condition. The controller response  with respect to variations of the macroscopic variables depends on appropriate choice of the elements of $\Lambda$. Also, it is possible to choose different weights for the considered macroscopic information through the input matrix $G_\rho$. The superscript $i-1$ in the input vector $[\: \psi^{i-1}_{\Delta p} \:\: \psi^{i-1}_{\Delta v} \:]^T$ means that we consider the macroscopic information calculated up to the preceding vehicle. We also set $\psi^{-1}_{\Delta p} = \psi^{i-1}_{\Delta v} = 0$ for $i=0$. The macroscopic function $\rho_i^M$ in the variable spacing policy (\ref{eq:mesoscopic_spacing_policy}) is defined as a linear combination of the components of $\rho_i$. The main advantage in considering $\rho^M_i$ consists in incorporating the whole macroscopic information about the platoon, thus avoiding complexity calculation explosion due to processing non aggregate state information by the control law.

\section{Mesoscopic String Stability}\label{sct:mesoscopic_string_stability}

In this section, we analyse String Stability of the closed loop system, by referring to the extended state $\hat{\chi}_i $ and its equilibrium point $\hat{\chi}_{e,i}$ in (\ref{eq:chi_extended_and_equilibrium}). Let $\Tilde{\chi}_i = \hat{\chi}_i - \hat{\chi}_{e,i}$ and $g_{cl,0}(\Tilde{\chi}_{-1}) = 0$. The closed loop dynamics of (\ref{eq:chi_extended_and_equilibrium}), where each control input is the output signal of the dynamic system in (\ref{eq:dynamic_control_law}), is described by
\begin{align}
    \dot{\Tilde{\chi}}_0 &= f_{cl}(\Tilde{\chi}_0), \ i=0, \label{eq:closed_loop_platoon_short_0}\\
    \dot{\Tilde{\chi}}_i &= f_{cl}(\Tilde{\chi}_i) + g_{cl,i}(\Tilde{\chi}_{i-1},\Tilde{\chi}_{i-2},...,\Tilde{\chi}_0), \ \forall \  i\in\mathcal{I}_N, \label{eq:closed_loop_platoon_short_i}
\end{align}  
where $f_{cl}:\mathbb{R}^{2+r} \rightarrow \mathbb{R}^{2+r}$ is the vector field describing the evolution dynamics of each isolated subsystem, and $g_{cl,i}:\underbrace{\mathbb{R}^{2+r}\times\cdot\cdot\cdot\times\mathbb{R}^{2+r}}_{i \text{ times}} \rightarrow \mathbb{R}^{2+r}$ is the interconnection term. Moreover, $f_{cl}(0) = 0$ and $g_{cl,i}(0,0,...,0) = 0$.
We assume that the virtual leader $i=-1$ has a constant speed $v_{-1} = \Bar{v}>0, \ u_{-1} = 0$. The assumption of $\Delta p_0(t) = -\Delta\Bar{p}, \ \forall t\geq 0$ is considered. 

\begin{theorem}\label{thm:StringStability}
    Consider the autonomous system in (\ref{eq:closed_loop_platoon_short_0}) and (\ref{eq:closed_loop_platoon_short_i}). If there exist functions $\beta$ of class $\mathcal{KL}$ and $\gamma$ of class $\mathcal{K}_\infty$, such that for each $i\in\mathcal{I}_N^0$
    \begin{equation}\label{eq:ISS_platoon_condition}
        |\Tilde{\chi}_i(t)| \leq \beta( |\Tilde{\chi}_i(0)| , t ) + \gamma\left( \max_{j=0,...,i-1}|\Tilde{\chi}_j(\cdot)|_{\infty}^{[0,t]} \right)
    \end{equation}
    $\forall \ t \geq 0$ and $\gamma(s) \leq \Tilde{\gamma}s$, for $s \geq 0$  and $\Tilde{\gamma}\in(0,1)$, then the interconnected system is String Stable. Moreover, if the origin of each isolated subsystem is exponentially stable and each interconnection term $g_{cl,i}, \ \forall i\in\mathcal{I}_N^0$, is bounded by 
    \begin{equation}\label{eq:gcl_sum_bound_condition}
        |g_{cl,i}(\Tilde{\chi}_{i-1},\Tilde{\chi}_{i-2},...,\Tilde{\chi}_0)| \leq \sum_{j=0}^{i-1} k_{ij}|\Tilde{\chi}_j|
    \end{equation}
    with constants $k_{ij} \in \mathbb{R}^+, \ \forall \ j\in\mathcal{I}_N^0$, then the autonomous system is Asymptotically String Stable.
\end{theorem}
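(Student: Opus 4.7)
The plan is to prove each of the two assertions by induction on the vehicle index $i$, exploiting the sub-unitary linear ISS gain for the String Stability part and a time-shift of (\ref{eq:ISS_platoon_condition}) for the asymptotic part.

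For Part 1 (String Stability), define $\alpha(r) = \beta(r,0)$, which is of class $\mathcal{K}$ since $\beta$ is decreasing in its second argument. Set $D = \max_{i \in \mathcal{I}_N^0} |\tilde{\chi}_i(0)|$ and assume $D < \delta$. Interpreting the max over the empty index set as zero, the hypothesis gives $\sup_{t \geq 0}|\tilde{\chi}_0(t)| \leq \alpha(D) =: B_0$. Assume inductively $\sup_{t \geq 0}|\tilde{\chi}_j(t)| \leq B_j$ for $j < i$. Then the linear majorisation $\gamma(s) \leq \tilde{\gamma}s$ and (\ref{eq:ISS_platoon_condition}) yield
\begin{equation*}
B_i := \sup_{t \geq 0}|\tilde{\chi}_i(t)| \leq \alpha(D) + \tilde{\gamma}\,\max_{j<i} B_j = \alpha(D)\sum_{k=0}^{i}\tilde{\gamma}^{k} \leq \frac{\alpha(D)}{1-\tilde{\gamma}}.
\end{equation*}
The bound is uniform in $i$ and $N$, so choosing $\delta = \alpha^{-1}\bigl((1-\tilde{\gamma})\epsilon\bigr)$ meets Definition~\ref{def:TSS_string_stability}.

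For Part 2 (Asymptotic String Stability), I would prove $|\tilde{\chi}_i(t)| \to 0$ by induction on $i$. The base case $i=0$ is immediate: equation (\ref{eq:closed_loop_platoon_short_0}) is autonomous with exponentially stable origin, so $|\tilde{\chi}_0(t)| \leq C_0 e^{-\lambda_0 t}|\tilde{\chi}_0(0)| \to 0$. For the inductive step, assume the claim for every $j<i$. Since (\ref{eq:closed_loop_platoon_short_i}) is time-invariant, the trajectory-based ISS bound (\ref{eq:ISS_platoon_condition}) can be re-applied from any initial time $T \geq 0$, yielding
\begin{equation*}
|\tilde{\chi}_i(T+s)| \leq \beta\bigl(|\tilde{\chi}_i(T)|,\,s\bigr) + \gamma\Bigl(\max_{j<i}\sup_{\tau \in [T,T+s]}|\tilde{\chi}_j(\tau)|\Bigr).
\end{equation*}
Fix $\eta > 0$. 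By the induction hypothesis there exists $T^{\star}$ such that $\sup_{\tau \geq T^{\star}}|\tilde{\chi}_j(\tau)| < \eta$ for all $j<i$. Part 1 ensures $|\tilde{\chi}_i(T^{\star})| \leq \alpha(D)/(1-\tilde{\gamma})$, hence $\beta(|\tilde{\chi}_i(T^{\star})|,s) \to 0$ as $s \to \infty$, giving $\limsup_{t\to\infty}|\tilde{\chi}_i(t)| \leq \gamma(\eta)$; since $\eta$ is arbitrary and $\gamma \in \mathcal{K}_\infty$, the conclusion follows.

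The main obstacle is the time-shift invoked in the inductive step of Part 2: the inequality (\ref{eq:ISS_platoon_condition}) is stated only for the initial time $t=0$, and converting it to an inequality starting at an arbitrary $T$ requires the closed-loop dynamics to be autonomous so that the state $\tilde{\chi}_i(T)$ can legitimately play the role of a new initial condition. The exponential stability and linear interconnection bound in the additional hypotheses are invoked implicitly to guarantee that the ISS estimate itself can be derived (via a standard converse Lyapunov construction for exponentially stable systems perturbed by a linearly bounded input) with a gain that can be taken linear and sub-unitary, matching the premise of Part 1. Verifying the uniform boundedness of $|\tilde{\chi}_i(T)|$ via Part 1 is essential, as otherwise the $\beta$-term in the shifted bound cannot be driven to zero.
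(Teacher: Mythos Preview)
Your Part~1 argument coincides with the paper's: both iterate the ISS bound forward, sum the geometric series in $\tilde\gamma$, and invert $\alpha(\cdot)=\beta(\cdot,0)$ to produce $\delta$. (Your displayed equality $\alpha(D)+\tilde\gamma\max_{j<i}B_j = \alpha(D)\sum_{k=0}^i\tilde\gamma^k$ should strictly be an inequality, justified inductively once one observes that the $B_j$ are nondecreasing so the max equals $B_{i-1}$; this is cosmetic.)

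Your Part~2 takes a genuinely different route. The paper does \emph{not} use a time-shift/cascade argument; instead it invokes the converse Lyapunov theorem for the exponentially stable isolated dynamics $f_{cl}$ to obtain a quadratic-type $W$, forms the weighted composite $W_c(\tilde\chi)=\sum_i d_{c,i}W(\tilde\chi_i)$, and uses the linear bound~(\ref{eq:gcl_sum_bound_condition}) together with an $M$-matrix argument (Khalil, Lemma~9.7) to choose the weights so that $\dot W_c<0$, giving asymptotic stability of the stacked system. Your approach is more elementary and in fact does not need the two extra hypotheses at all: the $\mathcal{KL}$ decay of $\beta$ already forces $\tilde\chi_0(t)\to0$, and the autonomy of the full cascade legitimises restarting~(\ref{eq:ISS_platoon_condition}) from any $T$ with the state $\tilde\chi(T)$ as new initial condition. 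Your closing paragraph therefore mis-diagnoses the role of exponential stability and of~(\ref{eq:gcl_sum_bound_condition}): they are not ``invoked implicitly'' by your argument --- your argument simply does not need them. The paper's construction, by contrast, genuinely requires them because the composite-Lyapunov method needs quadratic bounds on $W$ and linear growth of the coupling. What your route buys is a proof of convergence under the weaker Part-1 hypotheses alone; what the paper's route buys is an explicit Lyapunov function for the whole interconnection, which is reusable for robustness or perturbation analyses.
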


\begin{proof}
    The first part of the proof is based on the forward recursive application of the Input-to-State Stability (ISS) property in (\ref{eq:ISS_platoon_condition}) through an inductive method.
    Since the assumption $\gamma(s) \leq \Tilde{\gamma}s$ holds, then dynamics $\Tilde{\chi}_{i=\{0,1\}}$ trivially verifies
    \begin{align}
        |\Tilde{\chi}_0(t)| &\leq \beta(|\Tilde{\chi}_0(0)|,t), \ \forall \ t \geq 0, \\
        |\Tilde{\chi}_1(t)| &\leq \beta(|\Tilde{\chi}_1(0)|,t)+\Tilde{\gamma}|\Tilde{\chi}_0(\cdot)|^{[0,t]}_\infty, \ \forall \ t \geq 0,
    \end{align}
    where $|\Tilde{\chi}_0(\cdot)|^{[0,t]}_\infty \leq \beta(|\Tilde{\chi}_0(0)|,0)$. Defining $|\Tilde{\chi}_M(0)| = \max\{ |\Tilde{\chi}_0(0)| , |\Tilde{\chi}_1(0)| \}$, then for $i=0$ and $i=1$:
    \begin{align}
        |\Tilde{\chi}_0(t)| &\leq \beta(|\Tilde{\chi}_M(0)|,0), \ \forall \ t \geq 0, \\
        |\Tilde{\chi}_1(t)| &\leq \beta(|\Tilde{\chi}_M(0)|,0)(1+\Tilde{\gamma}), \ \forall \ t \geq 0.
    \end{align}
    Since $\Tilde{\gamma} \in \mathbb{R}^+$, then 
    \begin{align}
        |\Tilde{\chi}_{i=0,1}(t)| \leq \beta(|\Tilde{\chi}_M(0)|,0)(1+\Tilde{\gamma}), \ \forall \ t \geq 0.
    \end{align}
    Dynamics $\Tilde{\chi}_{i=2}$ verifies:
    \begin{align}
        \nonumber |\Tilde{\chi}_2(t)| &\leq \beta(|\Tilde{\chi}_2(0)|,t) \\
        & \quad +\Tilde{\gamma}\max_{j=0,1}|\Tilde{\chi}_j(\cdot)|^{[0,t]}_\infty, \ \forall \ t \geq 0.
    \end{align}
    By defining $|\Tilde{\chi}_M'(0)| = \max_{j=0,1,2}\{ |\Tilde{\chi}_j(0)| \}$, then
    \begin{equation}
        |\Tilde{\chi}_2(t)| \leq \beta(|\Tilde{\chi}_M'(0)|,0)(1+\Tilde{\gamma}+\Tilde{\gamma}^2), \ \forall \ t \geq 0,
    \end{equation}
    where the bound holds also for $i=0$ and $i=1$. By recursively applying these steps, and since the assumption $\Tilde{\gamma}\in(0,1)$ holds, then for each $i\in\mathcal{I}_N^0$ the following inequality is verified:
    \begin{align}\label{eq:chi_beta_bound}
        \nonumber |\Tilde{\chi}_i(t)| &\leq \beta\left( \max_{j=0,...,i}|\Tilde{\chi}_j(0)|,0 \right)\sum_{j=0}^i \Tilde{\gamma}^j \\
        \nonumber &\leq \beta\left( \max_{j=0,...,i}|\Tilde{\chi}_j(0)|,0 \right)\sum_{j=0}^\infty \Tilde{\gamma}^j \\
        &\leq \frac{1}{1-\Tilde{\gamma}}\beta\left( \max_{j=0,...,i}|\Tilde{\chi}_j(0)|,0 \right), \ \forall \ t \geq 0.
    \end{align}
    Then, by applying $\max\{\cdot\}$ operator to the first and last term of (\ref{eq:chi_beta_bound}) we get:
    \begin{equation}\label{eq:string_stability_result}
        \max_{i\in\mathcal{I}_N^0}|\Tilde{\chi}_i(t)| \leq \frac{1}{1-\Tilde{\gamma}}\beta\left( \max_{i\in\mathcal{I}_N^0}|\Tilde{\chi}_i(0)|,0 \right), \ \forall \ t \geq 0.
    \end{equation}
    Let us define $\omega(s) = \beta(s,0), \ s \geq 0$. By definition of $\mathcal{KL}$ functions, $\omega$ is $\mathcal{K}_\infty$ and hence invertible. Since (\ref{eq:string_stability_result}) holds for any $t \geq 0$, then 
    \begin{equation}\label{eq:chi_delta_bound}
        \delta = \omega^{-1}((1-\Tilde{\gamma})\epsilon), \quad \forall \ \epsilon \geq 0.
    \end{equation}
    The value of $\delta$ in (\ref{eq:chi_delta_bound}) does not depend on the system dimension. From (\ref{eq:chi_beta_bound}), (\ref{eq:string_stability_result}) and (\ref{eq:chi_delta_bound}), String Stability is ensured according to Definition \ref{def:TSS_string_stability}.
    
    We focus now on the possibility to ensure Asymptotic String Stability. This second part is based on a composition of Lyapunov functions (see \cite{B_khalil_2002}).
    By the converse Lyapunov theorem, there exists a function $W:\mathbb{R}^3 \rightarrow \mathbb{R}^+$ and constants $\underline{\alpha},\Bar{\alpha},\alpha,\alpha' > 0$ such that
    \begin{align}
        \underline{\alpha}|\Tilde{\chi}_i|^2 \leq W(\Tilde{\chi}_i) \leq \Bar{\alpha}|\Tilde{\chi}_i|^2 \\
        \frac{\partial W(\Tilde{\chi}_i) }{\partial \Tilde{\chi}_i}f_{cl}(\Tilde{\chi}_i) \leq -\alpha|\Tilde{\chi}_i|^2\\
        \left| \frac{\partial W(\Tilde{\chi}_i) }{\partial \Tilde{\chi}_i} \right| \leq \alpha'|\Tilde{\chi}_i|
    \end{align}
    In the following, for the sake of notational simplicity, we denote $W_i = W(\Tilde{\chi}_i)$. By computing the time derivative of the Lyapunov function $W_i$ with respect to the system with non-zero interconnected term, we obtain:
    \begin{align}
        \nonumber \dot{W}_i &= \frac{\partial W}{\partial \Tilde{\chi}_i}(f_{cl}(\Tilde{\chi}_i) + g_{cl,i}(\Tilde{\chi}_{i-1},...,\Tilde{\chi}_0)) \\
        \nonumber &\leq -\alpha|\Tilde{\chi}_i|^2 + \left| \frac{\partial W}{\partial \Tilde{\chi}_i} \right| |g_{cl,i}(\Tilde{\chi}_{i-1},...,\Tilde{\chi}_0))| \\
        &\leq -\alpha|\Tilde{\chi}_i|^2 + \alpha' |\Tilde{\chi}_i| |g_{cl,i}(\Tilde{\chi}_{i-1},...,\Tilde{\chi}_0)|
    \end{align}
    If condition in (\ref{eq:gcl_sum_bound_condition}) is verified, then 
    \begin{align}
        \dot{W}_i &\leq -\alpha|\Tilde{\chi}_i|^2 + \alpha' |\Tilde{\chi}_i|\sum_{j=0}^{i-1}k_{ij}|\Tilde{\chi}_j|.
    \end{align}
    Let us introduce $\hat{\chi}$ and $\hat{\chi}_e$ as the extended lumped state of the platoon and the extended equilibrium point, respectively, that are defined similarly as in (\ref{eq:platoon_state_and_equilibrium}). Let us consider $\Tilde{\chi} = \hat{\chi} - \hat{\chi}_e$ and the parameters $d_{c,i} > 0$. Then, we define the composite function $W_c$:
    \begin{equation}\label{eq:Lyapunov_composite}
        W_c(\Tilde{\chi}) = \sum_{i=0}^N d_{c,i} W(\Tilde{\chi}_i).
    \end{equation}
    It satisfies
    \begin{align}
        &\underline{\alpha}_c|\Tilde{\chi}|^2 \leq W_c(\Tilde{\chi}) \leq \Bar{\alpha}_c|\Tilde{\chi}|^2, \\
        \underline{\alpha}_c &= \min_{i\in\mathcal{I}_N^0}\{ d_{c,i} \}\underline{\alpha}, \:\: \Bar{\alpha}_c = \max_{i\in\mathcal{I}_N^0}\{ d_{c,i} \}\Bar{\alpha}.
    \end{align}
    The time derivative of the composite function in (\ref{eq:Lyapunov_composite}) is
    \begin{align}\label{eq:dot_Lyapunov_composite}
        \dot{W}_c(\Tilde{\chi}) \leq \sum_{i=0}^N d_i \left[ -\alpha|\Tilde{\chi}_i|^2 + \alpha' |\Tilde{\chi}_i|\sum_{j=0}^{i-1} k_{ij}|\Tilde{\chi}_j| \right].
    \end{align}
    We define the operator $\phi:\mathbb{R}^{2N+1} \rightarrow \mathbb{R}^{N+1}$ as
    \begin{equation}
        \phi(\Tilde{\chi}) = [ |\Tilde{\chi}_0| \ |\Tilde{\chi}_1| \ ... \ |\Tilde{\chi}_N| ]^T.
    \end{equation}
    Then, equation (\ref{eq:dot_Lyapunov_composite}) can be written as
    \begin{equation}\label{eq:dot_Lyapunov_composite_phi_bound}
        \dot{W}_c(\Tilde{\chi}) \leq -\frac{1}{2}\phi(\Tilde{\chi})^T(DS + S^T D)\phi(\Tilde{\chi}),
    \end{equation}
    where 
    \begin{equation}
        D = diag(d_{c,0},d_{c,1},...,d_{c,N})
    \end{equation}
    and $S$ is an $N \times N$ matrix whose elements are
    \begin{equation}
        s_{ij} = 
        \begin{cases}
            \alpha, \:\:\: & \text{if } i=j \\
            -\alpha' k_j, \:\:\: & \text{if } i<j \\
            0, \:\:\: & \text{if } i>j
        \end{cases}
    \end{equation}
    Since $\alpha > 0$, each leading principal minor of $S$ is positive and hence it is an $M-$matrix. By \cite[Lemma~9.7]{B_khalil_2002} there exists a matrix $D$ such that $DS+S^T D > 0$. Consequently, $\dot{W}_c$ in (\ref{eq:dot_Lyapunov_composite_phi_bound}) is negative definite. It follows that $W_c$ in (\ref{eq:Lyapunov_composite}) is a Lyapunov function for the overall autonomous system described by (\ref{eq:closed_loop_platoon_short_0}) and (\ref{eq:closed_loop_platoon_short_i}). Therefore, there exists a $\mathcal{KL}$ function $\beta_c : \mathbb{R}^+ \times \mathbb{R}^+ \rightarrow \mathbb{R}^+$ such that
    \begin{equation}\label{eq:chi_beta_composite_bound}
        |\Tilde{\chi}(t)| \leq \beta_c(|\chi(0)|,t), \:\: \forall \ t \geq 0.
    \end{equation}
    Condition in (\ref{eq:chi_beta_composite_bound}) ensures the asymptotic stability:
    \begin{equation}\label{eq:chi_asymptotic_limit}
        \lim_{t \rightarrow \infty} |\Tilde{\chi}_i(t)| = 0, \:\: \forall \ i \in \mathcal{I}_N^0.
    \end{equation}
    The platoon system is proved to be String Stable by (\ref{eq:string_stability_result}) and (\ref{eq:chi_delta_bound}). Consequently, for each $i\in\mathcal{I}_N^0$ the state evolution $|\Tilde{\chi}_i|$ is constrained by a bound that is independent from the system dimension. Furthermore, from (\ref{eq:chi_asymptotic_limit}) Asymptotic String Stability is ensured according to Definition \ref{def:ATSS_string_stability}.
\end{proof}
The consequences of the interconnection term $g_{cl,i}$ in (\ref{eq:closed_loop_platoon_short_i}) acting on the closed loop system are described by the function $\gamma$ in (\ref{eq:ISS_platoon_condition}).  As it is shown in (\ref{eq:string_stability_result}), the higher the contribution of the macroscopic information, the greater the perturbation acting on each leader-follower situation. However, this perturbation acting on each follower results in a cascade effect that ensures Asymptotic String Stability of the whole platoon. 

\section{Mesoscopic Control Laws}\label{sct:mesoscopic_control}

In this section, we show the generality of our approach by introducing two control laws that consider mesoscopic quantities for a single car-following situation. String Stability and Asymptotic String Stability as in Definitions \ref{def:TSS_string_stability} and \ref{def:ATSS_string_stability} are proved when the proposed control laws are implemented for each leader-follower situation along the platoon. The first control law adopts the constant spacing policy in (\ref{eq:constant_spacing_policy}), while the second one implements the variable spacing policy in (\ref{eq:mesoscopic_spacing_policy}). Both  consider the function $\rho_i$ describing macroscopic information in (\ref{eq:rho_dynamic_system}). Each vehicle is modeled according to dynamics (\ref{eq:longitudinal_dynamics}) and each car-following situation according to $\chi_i$ in (\ref{eq:car_following_dynamics_i}) and (\ref{eq:car_following_dynamics_0}). 

\subsection{Macroscopic functions properties}

In order to exploit macroscopic information, functions $\psi^i_{\Delta p}, \ \psi^i_{\Delta v}$ in (\ref{eq:fmacro_pv}) are computed with respect to the mean and variance in (\ref{eq:meanvar_distance}) and (\ref{eq:meanvar_speederror}): $\psi^i_{\Delta p}(\mu_{\Delta p,i},\sigma^2_{\Delta p,i})$, $\psi^i_{\Delta v}(\mu_{\Delta v,i},\sigma^2_{\Delta v,i})$. In the sequel, for simplicity we omit the arguments. Moreover, by referring to the state vector $\Tilde{\chi}_i$ and to the dynamical system in (\ref{eq:closed_loop_platoon_short_0}) and (\ref{eq:closed_loop_platoon_short_i}), we consider $\mu_{\Delta p}$ and $\sigma^2_{\Delta p}$ computed with respect to $\Delta\Tilde{p} = \Delta p + \Delta\Bar{p}$. For ease in notation, we introduce $e_{\Delta p} = \Delta\Bar{p}$ and $e_{\Delta v} = 0$.

\begin{lemma}\label{lmm:fmacro_bounds}
    Given the macroscopic functions $\psi^i_{\Delta p}, \ \psi^i_{\Delta v}$ in (\ref{eq:fmacro_pv}) that verify the following property:
    \begin{align}\label{eq:fmacro_pv_max_bound}
        |\psi^i_{l}(\mu_{l,i},\sigma_{l,i}^2)| &\leq c_{l} \max_{j=0,...,i}|l_j+e_l|, \: l\in\{\Delta p , \Delta v\},
    \end{align}
    then
    \begin{equation}\label{eq:psi_max_bound}
        a\psi^i_{\Delta p} + b\psi^i_{\Delta v} \leq c_{\chi}\max_{j=0,...,i}|\Tilde{\chi}_j|
    \end{equation}
    for some constants $c_{\Delta p},c_{\Delta v},c_{\chi}>0$. Moreover, if functions $\psi^i_{\Delta p}, \ \psi^i_{\Delta v}$ are bounded by
    \begin{align}\label{eq:fmacro_pv_sum_bound}
        |\psi^i_{l}(\mu_{l,i},\sigma_{l,i}^2)| &\leq \sum_{j=0}^i k^l_{ij}|l_j+e_l|, \: l\in\{\Delta p , \Delta v\},
    \end{align}
    then
    \begin{equation}\label{eq:psi_sum_bound}
        a\psi^i_{\Delta p} + b\psi^i_{\Delta v} \leq \sum_{j=0}^i k^{\chi}_{ij}|\Tilde{\chi}_j|.
    \end{equation}
    for some constants $k^{\Delta p}_{ij},k^{\Delta v}_{ij},k^{\chi}_{ij}>0$.
\end{lemma}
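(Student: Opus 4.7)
The plan is to reduce both claims to a coordinate-wise estimate of $|\tilde{\chi}_j|$ and then combine the two macroscopic bounds via the triangle inequality.

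First I would unpack the relationship between the state components and the extended error. Recall from (\ref{eq:chi_extended_and_equilibrium}) that $\tilde{\chi}_j = [\Delta \tilde{p}_j,\ \Delta v_j,\ \rho_j^T]^T$ where $\Delta \tilde{p}_j = \Delta p_j + \Delta \bar{p}$, and by the definitions $e_{\Delta p} = \Delta \bar{p}$ and $e_{\Delta v} = 0$. Hence for every $j$,
\begin{equation*}
|\Delta p_j + e_{\Delta p}| = |\Delta \tilde{p}_j| \leq |\tilde{\chi}_j|, \qquad |\Delta v_j + e_{\Delta v}| = |\Delta v_j| \leq |\tilde{\chi}_j|,
\end{equation*}
since each component of a vector is dominated by its Euclidean norm. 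This is the link that converts bounds expressed in terms of $|l_j+e_l|$ into bounds expressed in terms of $|\tilde{\chi}_j|$.

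Next I would apply the triangle inequality to the linear combination $a\psi^i_{\Delta p} + b\psi^i_{\Delta v}$, giving $|a|\,|\psi^i_{\Delta p}| + |b|\,|\psi^i_{\Delta v}|$. For the first statement, substitute the max-type bound (\ref{eq:fmacro_pv_max_bound}) for each term, then apply the coordinate-wise domination above to replace $\max_j |l_j+e_l|$ by $\max_j |\tilde{\chi}_j|$ for both $l=\Delta p$ and $l=\Delta v$. Factoring out the maximum yields (\ref{eq:psi_max_bound}) with the explicit constant $c_\chi = |a| c_{\Delta p} + |b| c_{\Delta v} > 0$. For the second statement, substitute the sum-type bound (\ref{eq:fmacro_pv_sum_bound}) for each term, again majorize $|l_j+e_l|$ by $|\tilde{\chi}_j|$ term-by-term, and collect coefficients index-by-index to obtain (\ref{eq:psi_sum_bound}) with $k^\chi_{ij} = |a| k^{\Delta p}_{ij} + |b| k^{\Delta v}_{ij} > 0$.

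There is no real obstacle here; the lemma is essentially a bookkeeping statement that propagates upper bounds from the scalar macroscopic signals to the full extended state. The only small point to be careful about is that the estimates given in the hypothesis are on absolute values, while the conclusions are written without them, so one should either take $a,b \geq 0$ tacitly (as the usage in the controllers suggests) or interpret the inequalities in (\ref{eq:psi_max_bound})–(\ref{eq:psi_sum_bound}) as bounds on the absolute value of the left-hand side. Either reading makes the combination of constants immediate and strictly positive, which completes the proof.
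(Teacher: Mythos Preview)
Your argument is correct and follows essentially the same route as the paper: triangle inequality on $a\psi^i_{\Delta p}+b\psi^i_{\Delta v}$, then the hypothesis bounds, then the componentwise estimate $|l_j+e_l|\leq|\tilde\chi_j|$. For the sum bound the paper takes a slightly more roundabout path, packaging $ak^{\Delta p}_{ij}|\Delta\tilde p_j|+bk^{\Delta v}_{ij}|\Delta v_j|$ as the $1$-norm of a diagonal matrix acting on $\tilde\chi_j$ and then invoking the norm equivalence $|\cdot|_1\leq\sqrt{2+r}\,|\cdot|_2$, which produces $k^{\chi}_{ij}=\sqrt{2+r}\max\{ak^{\Delta p}_{ij},bk^{\Delta v}_{ij}\}$; your direct term-by-term majorization gives the alternative constant $k^{\chi}_{ij}=|a|k^{\Delta p}_{ij}+|b|k^{\Delta v}_{ij}$, which is cleaner and avoids the matrix detour without changing the substance of the proof.
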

\begin{proof}
    See Appendix \ref{app_proofLemma1}.
\end{proof}

In order to ensure platoon string stability, we propose the macroscopic functions
\begin{equation}\label{eq:psi_delta_p}
    \psi_{\Delta p}^i = \gamma_{\Delta p} sign(\Delta\Bar{p}+\mu_{\Delta p,i})\sqrt{\sigma^2_{\Delta p,i}},
\end{equation}
\begin{equation}\label{eq:psi_delta_v}
    \psi_{\Delta v}^i = \gamma_{\Delta v} sign(\mu_{\Delta v,i})\sqrt{\sigma^2_{\Delta v,i}}
\end{equation}
where $\gamma_{\Delta p},\gamma_{\Delta v} > 0$ are constant parameters, and $\mu_{\Delta p,i}$, $\mu_{\Delta v,i}$, $\sigma^2_{\Delta p,i}$ and $\sigma^2_{\Delta v,i}$ are defined in (\ref{eq:meanvar_distance}) and (\ref{eq:meanvar_speederror}). The functions $\psi^i_{\Delta p}$ and $\psi^i_{\Delta v}$ are such that:
\begin{align}\label{eq:fmacro_pv_meanvar_property}
    \psi_{l}^i(\mu_{l,i},0) &= 0 , \ \ \psi_{l}^i(e_l,\sigma_{l,i}) = 0, \: l\in\{\Delta p , \Delta v\},
\end{align}
and $e_l$ introduced before.

Given the chosen macroscopic functions in (\ref{eq:psi_delta_p}) and (\ref{eq:psi_delta_v}), we can give the following result:
\begin{lemma}\label{lmm:psi_bounds}
    Macroscopic functions $\psi_{\Delta p}^i$ and $\psi_{\Delta v}^i$ in (\ref{eq:psi_delta_p}) and (\ref{eq:psi_delta_v}) are bounded by:
    \begin{equation}\label{eq:psi_pv_properties}
        |\psi_{l}^i| \leq \gamma_{l}\max_{j=0,...,i}|l_j+e_l|;  \:\:
        |\psi_{l}^i| \leq \gamma_{l}\sum_{j=0}^i |l_j+e_l|, \:\: l\in\{\Delta p, \Delta v\}.
    \end{equation}
\end{lemma}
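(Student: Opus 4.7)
The plan is to bound $|\psi_l^i|$ in two stages: first reduce to a bound on $\sqrt{\sigma_{l,i}^2}$, and then control the variance itself by the centered deviations $|l_j + e_l|$ using translation invariance.

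First, since $|\mathrm{sign}(\cdot)| \leq 1$, the definitions in (\ref{eq:psi_delta_p}) and (\ref{eq:psi_delta_v}) immediately give $|\psi_l^i| \leq \gamma_l \sqrt{\sigma_{l,i}^2}$ for $l \in \{\Delta p, \Delta v\}$. So both inequalities in (\ref{eq:psi_pv_properties}) reduce to bounding $\sqrt{\sigma_{l,i}^2}$ by $\max_{j=0,\dots,i}|l_j+e_l|$ and by $\sum_{j=0}^i |l_j+e_l|$, respectively.

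The key observation is that variance is invariant under translation of its argument. For $l = \Delta p$ (with $e_{\Delta p} = \Delta\bar{p}$), writing $\Delta\tilde{p}_j = \Delta p_j + \Delta\bar{p}$, the mean of $\Delta\tilde{p}_j$ is $\mu_{\Delta p,i}+\Delta\bar{p}$ and so
\[
\sigma_{\Delta p,i}^2 = \frac{1}{i+1}\sum_{j=0}^i \bigl(\Delta\tilde{p}_j - (\mu_{\Delta p,i}+\Delta\bar{p})\bigr)^2 \leq \frac{1}{i+1}\sum_{j=0}^i \Delta\tilde{p}_j^{\,2},
\]
using the elementary inequality $\mathrm{Var}(X)\leq \mathbb{E}[X^2]$, since subtracting the mean minimizes the mean squared deviation. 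For $l = \Delta v$ (with $e_{\Delta v}=0$) the same inequality holds directly. In unified notation, $\sigma_{l,i}^2 \leq \frac{1}{i+1}\sum_{j=0}^i (l_j+e_l)^2$.

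From here, the two bounds follow by standard norm estimates. The maximum bound comes from replacing every term in the average with its largest value: $\frac{1}{i+1}\sum_{j=0}^i (l_j+e_l)^2 \leq \max_{j=0,\dots,i}(l_j+e_l)^2$, whose square root yields the first inequality in (\ref{eq:psi_pv_properties}). The sum bound comes from dropping the factor $\frac{1}{i+1}\leq 1$ and then invoking the $\ell^2$-vs-$\ell^1$ inequality $\sqrt{\sum_j x_j^2}\leq \sum_j |x_j|$ with $x_j = l_j + e_l$. Multiplying by $\gamma_l$ gives both claimed bounds.

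There is essentially no hard step: the only subtlety is remembering that the shift $e_l$ must be absorbed into the variance via its translation invariance rather than treated as an additive correction, so that the bound is in terms of $|l_j+e_l|$ (i.e.\ the deviation from the equilibrium value) and not of $|l_j|$ alone. Everything else is a one-line application of $\mathrm{Var}(X)\leq \mathbb{E}[X^2]$ and of standard norm comparisons.
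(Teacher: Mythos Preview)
Your proof is correct. The overall architecture matches the paper's: reduce to bounding $\sqrt{\sigma_{l,i}^2}$ via $|\mathrm{sign}(\cdot)|\leq 1$, then control the variance by the centered deviations $|l_j+e_l|$. For the sum bound the two arguments are essentially identical (drop the squared-mean term, then use $|x|_2\leq |x|_1$); the paper in fact retains the factor $1/\sqrt{i+1}$, giving a slightly sharper inequality than the statement requires, whereas you discard $1/(i+1)\leq 1$ before taking the square root. For the max bound the routes differ: the paper invokes the Popoviciu-type range inequality $\sigma^2\leq \tfrac14(\max-\min)^2$ and then bounds the range by $2\max_j|l_j+e_l|$, while you go directly through $\mathrm{Var}(X)\leq \mathbb{E}[X^2]\leq \max_j X_j^2$. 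Your route is a bit more elementary and avoids introducing an auxiliary inequality; the paper's route makes the role of the spread of the data explicit. Either way the final constant is exactly $\gamma_l$, so nothing is lost or gained in the end.
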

\begin{proof}
    See Appendix \ref{app_proofLemma2}.
\end{proof}

We remark that the considered approach is general with respect to the choice of the macroscopic functions $\psi_{\Delta p}^i$ and $\psi_{\Delta v}^i$. Indeed, macroscopic functions different from the ones in \eqref{eq:psi_delta_p} and \eqref{eq:psi_delta_v} can be used, e.g. as done in \cite{Iovine2015NAHS}, provided that conditions in \eqref{eq:psi_pv_properties} are satisfied.

\subsection{Control strategy for constant spacing policy}

In order to embed macroscopic information in the control law $u^{cp}_i$ associated to the $i$-th vehicle, $\forall \ i\in\mathcal{I}_N^0$, and implementing the constant spacing policy in (\ref{eq:constant_spacing_policy}), we explicit the dynamic system in (\ref{eq:rho_dynamic_system}) as:
\begin{equation}\label{eq:rho_dynamic_system_CP}\color{black}
    \begin{cases}
        \Dot{\rho}_i = -\lambda\rho_i + a\psi^{i-1}_{\Delta p} + b\psi^{i-1}_{\Delta v} \\
        \rho(0) = 0
    \end{cases}
\end{equation}
with system dimension $r=1$, state transition matrix $\Lambda = -\lambda$ and input matrix $G_{\rho} = [ \: a \: b \: ]$; where $a,b \geq 0$ are chosen parameters. For the constant spacing policy we define $\rho^M_i = \rho_i$. The control law $u^{cp}_i$ is

\begin{equation}\label{eq:control_input_CP}
    u^{cp}_i = u_{i-1} + \Delta\dot{v}_i^{cp,r} - K^{cp}_{\Delta v}(\Delta v_i - \Delta v_i^{cp,r}) - (\Delta p_i - \Delta p_i^r)-\rho_{i}
\end{equation}
with equal constant gains $K^{cp}_{\Delta p},K^{cp}_{\Delta v}>0$, for each $i\in\mathcal{I}_N^0$, $\Delta p_i^r$ defined as in (\ref{eq:constant_spacing_policy}), and
\begin{align}
    \Delta v_i^{cp,r} &= -K^{cp}_{\Delta p}(\Delta p_i - \Delta p_i^r), \label{eq:delta_v_ref_CP} \\
    \Delta\dot{v}_i^{cp,r} &= -K^{cp}_{\Delta p}\Delta v_i. \label{eq:dot_delta_v_ref_CP}
\end{align}
The closed loop dynamics with respect to the extended state vector $\hat{\chi}_i$ in (\ref{eq:chi_extended_and_equilibrium}) is:

\begin{align} \label{eq:car_following_dynamics_i_closed_loop_CP}
    \dot{\hat{\chi}}_i  = \left[  
        \begin{array}{c}
            \Delta\dot{p}_i  \\
            \Delta\dot{v}_i  \\
            \dot{\rho}_{i}
        \end{array}
    \right]
    = \left[  
        \begin{array}{c}
            \Delta v_i  \\
            (*) \\
            -\lambda\rho_{i} + a\psi_{\Delta p}^{i-1} + b\psi_{\Delta v}^{i-1}
        \end{array}
    \right]
\end{align}
with 
\begin{align*}
    (*) = \Delta\dot{v}_i^{cp,r} - K^{cp}_{\Delta v}(\Delta v_i - \Delta v_i^{cp,r}) - (\Delta p_i + \Delta p_i^r) - \rho_{i}.
\end{align*}
Referring to the description of $\Tilde{\chi}_i=\hat{\chi}_i-\hat{\chi}_{e,i}$ as in (\ref{eq:closed_loop_platoon_short_0}) and (\ref{eq:closed_loop_platoon_short_i}), we remark that
\begin{equation}\label{eq:car_following_dynamics_g_cl}
    g_{cl,i}(\Tilde{\chi}_{i-1},\Tilde{\chi}_{i-2},...,\Tilde{\chi}_0) = \left[ 
        \begin{array}{c}
            0  \\
            0  \\
            a\psi_{\Delta p}^{i-1} + b\psi_{\Delta v}^{i-1}
        \end{array}
    \right].
\end{equation}

\begin{theorem}\label{thm:StringStability_CP}
    Consider the closed loop system in 
    (\ref{eq:car_following_dynamics_i_closed_loop_CP}). Given the parameters $K^{cp}_{\Delta p}>0,\ K^{cp}_{\Delta v}>0,\ \lambda>0$, then the origin of each isolated subsystem is exponentially stable. Moreover, there exist functions $\beta^{cp}\in\mathcal{KL}$  and $\gamma^{cp}\in\mathcal{K}_\infty$ such that
    \begin{equation}\label{eq:ISS_platoon_property_CP}
        |\Tilde{\chi}_i(t)| \leq \beta^{cp}(|\Tilde{\chi}_i(0)|,t) + \gamma^{cp}\left( \max_{j=0,...,i-1}|\Tilde{\chi}_j(\cdot)|^{[0,t]}_{\infty} \right)
    \end{equation}
    $\forall \ t \geq 0$ and $\gamma^{cp}(s) = \Tilde{\gamma}^{cp}s, \ s \geq 0, \ \Tilde{\gamma}^{cp}\in\mathbb{R}^+$. Also, there exist $a$ and $b$ in (\ref{eq:rho_dynamic_system_CP}) such that $\Tilde{\gamma}^{cp}\in (0,1)$. Consequently, the closed loop system in 
    (\ref{eq:car_following_dynamics_i_closed_loop_CP}) is Asymptotically String Stable.
\end{theorem}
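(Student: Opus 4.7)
The plan is to instantiate Theorem~\ref{thm:StringStability} for the particular closed loop in (\ref{eq:car_following_dynamics_i_closed_loop_CP}). Three items must be produced: exponential stability of the isolated subsystem, the ISS inequality (\ref{eq:ISS_platoon_property_CP}) with a \emph{linear} gain $\gamma^{cp}$, and a sum bound on $g_{cl,i}$ as in (\ref{eq:gcl_sum_bound_condition}). Then Asymptotic String Stability follows directly from Theorem~\ref{thm:StringStability}.

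First I would rewrite (\ref{eq:car_following_dynamics_i_closed_loop_CP}) in the error coordinates $\tilde{\chi}_i=[\,\Delta p_i+\Delta\bar p,\;\Delta v_i,\;\rho_i\,]^T$ and substitute (\ref{eq:delta_v_ref_CP})--(\ref{eq:dot_delta_v_ref_CP}) to eliminate $\Delta v_i^{cp,r}$ and $\Delta\dot v_i^{cp,r}$. The isolated subsystem ($g_{cl,i}=0$, equivalently $\psi^{i-1}_{\Delta p}=\psi^{i-1}_{\Delta v}=0$) is then linear with system matrix
\[
A_{cl}=\begin{bmatrix} 0 & 1 & 0 \\ -(1+K^{cp}_{\Delta v}K^{cp}_{\Delta p}) & -(K^{cp}_{\Delta p}+K^{cp}_{\Delta v}) & -1 \\ 0 & 0 & -\lambda \end{bmatrix}.
\]
Because $A_{cl}$ is block triangular, its spectrum is $\{-\lambda\}$ together with the roots of $s^2+(K^{cp}_{\Delta p}+K^{cp}_{\Delta v})s+(1+K^{cp}_{\Delta v}K^{cp}_{\Delta p})$. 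For any positive gains $K^{cp}_{\Delta p},K^{cp}_{\Delta v},\lambda$ the Routh--Hurwitz criterion (all coefficients positive in a second-order polynomial) gives Hurwitzness, hence global exponential stability of the origin of the isolated subsystem.

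Next, by the standard converse Lyapunov theorem for exponentially stable linear systems there exists a quadratic $V(\tilde{\chi}_i)=\tilde{\chi}_i^T P \tilde{\chi}_i$, $P\succ 0$, satisfying $A_{cl}^T P+PA_{cl}\preceq -Q$ for some $Q\succ 0$. From the explicit form of $g_{cl,i}$ in (\ref{eq:car_following_dynamics_g_cl}), the perturbation enters only the third coordinate as $w_i:=a\psi^{i-1}_{\Delta p}+b\psi^{i-1}_{\Delta v}$. Applying the max-bound of Lemma~\ref{lmm:psi_bounds} together with (\ref{eq:psi_max_bound}) of Lemma~\ref{lmm:fmacro_bounds} yields $|w_i|\le c_\chi \max_{j=0,\dots,i-1}|\tilde{\chi}_j|$ with $c_\chi$ a linear function of $a,b$ (through $\gamma_{\Delta p},\gamma_{\Delta v}$). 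A standard ISS computation on $V$ (completing the square on the cross term $2\tilde{\chi}_i^T P e_3 w_i$) then gives the exponential-plus-linear-gain bound
\[
|\tilde{\chi}_i(t)|\le \beta^{cp}(|\tilde{\chi}_i(0)|,t)+\tilde{\gamma}^{cp}\max_{j=0,\dots,i-1}|\tilde{\chi}_j(\cdot)|_\infty^{[0,t]},
\]
with $\beta^{cp}\in\mathcal{KL}$ exponentially decaying and $\tilde{\gamma}^{cp}$ proportional to $c_\chi$, hence to $\max\{a\gamma_{\Delta p},b\gamma_{\Delta v}\}$. Choosing $a,b$ sufficiently small (they are free design parameters in (\ref{eq:rho_dynamic_system_CP})) makes $\tilde{\gamma}^{cp}\in(0,1)$, which is exactly the hypothesis $\gamma(s)\le\tilde\gamma s$, $\tilde\gamma\in(0,1)$ of Theorem~\ref{thm:StringStability}. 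This step---quantifying $\tilde\gamma^{cp}$ precisely enough to see that it can be forced below $1$ by choosing the macroscopic weights---is the main technical obstacle, since it couples the Lyapunov constants of $A_{cl}$ with the macroscopic bound $c_\chi$.

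Finally, applying the second part of Lemma~\ref{lmm:fmacro_bounds} together with (\ref{eq:psi_pv_properties}) bounds the interconnection term as
\[
|g_{cl,i}(\tilde{\chi}_{i-1},\dots,\tilde{\chi}_0)|\le |a\psi^{i-1}_{\Delta p}+b\psi^{i-1}_{\Delta v}|\le \sum_{j=0}^{i-1} k_{ij}|\tilde{\chi}_j|,
\]
which is exactly condition (\ref{eq:gcl_sum_bound_condition}). With exponential stability of the isolated subsystem, the linear ISS gain $\tilde{\gamma}^{cp}\in(0,1)$, and the sum bound all in hand, Theorem~\ref{thm:StringStability} delivers Asymptotic String Stability of the closed loop (\ref{eq:car_following_dynamics_i_closed_loop_CP}), completing the argument.
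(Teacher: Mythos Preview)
Your proposal is correct and follows essentially the same route as the paper: verify exponential stability of the isolated subsystem, use Lemmas~\ref{lmm:fmacro_bounds} and~\ref{lmm:psi_bounds} to bound the interconnection term both as a $\max$ (for ISS) and as a sum (for condition~(\ref{eq:gcl_sum_bound_condition})), obtain a linear ISS gain that scales with $a,b$ and can therefore be pushed below~$1$, and then invoke Theorem~\ref{thm:StringStability}. The one methodological difference is in the first step: the paper builds an \emph{explicit} backstepping-type Lyapunov function $W^{cp}_i=\tfrac12(\Delta p_i+\Delta\bar p)^2+\tfrac12(\Delta v_i-\Delta v_i^{cp,r})^2+\tfrac12\rho_i^2$, from which it reads off concrete constants $\underline{\alpha}^{cp},\bar{\alpha}^{cp},\alpha^{cp}$ and the closed-form gain $\tilde{\gamma}^{cp}=\sqrt{\bar{\alpha}^{cp}/\underline{\alpha}^{cp}}\,d/(\alpha^{cp}\Upsilon^{cp})$ with $d=a\gamma_{\Delta p}+b\gamma_{\Delta v}$; you instead argue Hurwitzness of $A_{cl}$ via Routh--Hurwitz and then appeal to the converse Lyapunov theorem. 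Your argument is shorter and equally valid, but the paper's explicit construction has the practical payoff of a computable bound on $a,b$ that guarantees $\tilde{\gamma}^{cp}<1$, which is what your proposal flags as ``the main technical obstacle''.
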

\begin{proof}
    See Appendix \ref{app:theorem_CP}.  
\end{proof}
The mesoscopic control law in (\ref{eq:control_input_CP}) adopting a constant spacing policy is shown to ensure Asymptotic String Stability with respect to the closed loop system describing the vehicular platoon.

\subsection{Control strategy for variable spacing policy}

In this more general case, in order to obtain String Stability, we choose for  (\ref{eq:rho_dynamic_system}) 
an asymptotically stable dynamics with $r = 2$:

\begin{equation}\label{eq:rho_dynamic_system_VP}\color{black}
    \begin{cases}
        \Dot{\rho}_{1,i} = -\lambda_1 \rho_{1,i} + \rho_{2,i} \\
        \Dot{\rho}_{2,i} = -\lambda_2 \rho_{2,i} + a\psi^{i-1}_{\Delta p} + b\psi^{i-1}_{\Delta v} \\
        %
        %
        \rho_{1,i}(0) = \rho_{2,i}(0) = 0
    \end{cases}
\end{equation}
where
\begin{equation*}
    \Lambda = \left[ 
        \begin{array}{cc}
            -\lambda_1 & 1 \\
            0 & -\lambda_2
        \end{array}
    \right], \quad
    G_\rho = \left[  
        \begin{array}{cc}
            0 & 0  \\
            a & b
        \end{array}
    \right]
\end{equation*}
with $a,b \geq 0$ chosen parameters, and $\lambda_1, \ \lambda_2 > 0$.

The control law $u^{vp}_i$ associated to the $i$-th vehicle, $\forall \ i\in\mathcal{I}_N^0$, implementing the variable spacing policy in (\ref{eq:mesoscopic_spacing_policy}) is:
\begin{align}\label{eq:control_input_VP}
    \nonumber u_i &= u_{i-1} - (\Delta p_i - \Delta p_i^r) - K^{vp}_{\Delta v}(\Delta v_i - \Delta v_i^{vp,r}) \\
    \nonumber &\quad + (K^{vp}_{\Delta p} - \lambda_1)(\lambda_1\rho_{1,i} - \rho_{2,i}) +\lambda_2\rho_{2,i} \\
    &\quad -K^{vp}_{\Delta p}\Delta v_i - a\psi^{i-1}_{\Delta p} - b\psi^{i-1}_{\Delta v},
\end{align}
with equal constant gains $K^{vp}_{\Delta p},K^{vp}_{\Delta v}>0$  for each $i\in\mathcal{I}_N^0$, $\Delta p_i^r$ defined as in (\ref{eq:mesoscopic_spacing_policy}), and
\begin{align}
    \Delta v_i^{vp,r} &= \lambda_1\rho_{1,i}-\rho_{2,i}-K^{vp}_{\Delta p}(\Delta p_i-\Delta p_i^r)\label{eq:delta_v_ref_VP}.
\end{align}
To analyze the String Stability of the closed loop system, we consider the extended leader-follower state vector $\hat{\chi}_i$ and its corresponding equilibrium point $\hat{\chi}_{e,i}$ in (\ref{eq:chi_extended_and_equilibrium}). 
For each vehicle $i\in\mathcal{I}_N^0$, the resulting closed loop dynamics are:
\begin{align}\label{eq:car_following_dynamics_i_closed_loop_VP}
    \dot{\hat{\chi}}_i = \left[ 
        \begin{array}{c}
            \Delta\dot{p}_0  \\
            \Delta\dot{v}_0  \\
            \dot{\rho}_{1,i} \\
            \dot{\rho}_{2,i}
        \end{array}
    \right] = \left[  
        \begin{array}{c}
            \Delta v_i  \\
            (*) \\
            -\lambda_1\rho_{1,i} + \rho_{2,i} \\
            -\lambda_2\rho_{2,i} + a\psi_{\Delta p}^{i-1} + b\psi_{\Delta v}^{i-1}
        \end{array}
    \right]
\end{align}
with 
\begin{align*}
    (*) &= - (\Delta p_i - \Delta p_i^r) - K^{vp}_{\Delta v}(\Delta v_i - \Delta v_i^{vp,r}) \\
    \nonumber &\quad + (K^{vp}_{\Delta p} - \lambda_1)(\lambda_1\rho_{1,i} - \rho_{2,i}) +\lambda_2\rho_{2,i} \\
    &\quad -K^{vp}_{\Delta p}\Delta v_i - a\psi^{i-1}_{\Delta p} - b\psi^{i-1}_{\Delta v},
\end{align*}
Similarly to the case of a constant spacing policy, since $g_{cl,0}(\Tilde{\chi}_{-1}) = 0$, we can rewrite the system in (\ref{eq:car_following_dynamics_i_closed_loop_VP}) as (\ref{eq:closed_loop_platoon_short_0}) and (\ref{eq:closed_loop_platoon_short_i}), where 
\begin{equation}\color{black}
    g_{cl,i}(\Tilde{\chi}_{i-1},...,\Tilde{\chi}_0) = \left[
        \begin{array}{c}
            0  \\
            -\psi^{i-1}_{\Delta p} - \psi^{i-1}_{\Delta v} \\
            0 \\
            \psi^{i-1}_{\Delta p} + \psi^{i-1}_{\Delta v}
        \end{array}
    \right].   
\end{equation}
The following result holds:
\begin{theorem}\label{thm:StringStability_VP}
    Consider the closed loop system described by 
    (\ref{eq:car_following_dynamics_i_closed_loop_VP}). Given the parameters $K^{vp}_{\Delta p} > 0, \ K^{vp}_{\Delta v} > 0, \ \lambda_1,\lambda_2 > 0$, then the origin of each isolated subsystem is exponentially stable. Moreover, there exist functions $\beta^{vp}$ of class $\mathcal{KL}$ and $\gamma^{vp}$ of class $\mathcal{K}_\infty$ such that
    \begin{equation}\label{eq:ISS_platoon_property_VP}
        |\Tilde{\chi}_i(t)| \leq \beta^{vp}(|\Tilde{\chi}_i(0)|,t) + \gamma^{vp}\left( \max_{j=0,...,i} |\Tilde{\chi}_j(\cdot)|^{[0,t]}_\infty \right)
    \end{equation}
    $\forall t \geq 0$ and $\gamma^{vp}(s) = \Tilde{\gamma}^{vp}s, \ s \geq 0, \ \Tilde{\gamma}^{vp}\in\mathbb{R}^+$. Also, there exist $a$ and $b$ in (\ref{eq:rho_dynamic_system_VP}) such that $\Tilde{\gamma}^{vp}\in(0,1)$. Consequently, the closed loop system in 
    (\ref{eq:car_following_dynamics_i_closed_loop_VP}) is Asymptotically String Stable. 
\end{theorem}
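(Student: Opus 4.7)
The plan is to verify the three hypotheses of Theorem \ref{thm:StringStability} for the closed-loop system (\ref{eq:car_following_dynamics_i_closed_loop_VP}): exponential stability of each isolated subsystem, an ISS estimate with linear gain, and tunability of that gain below unity. The overall structure mirrors the constant spacing proof of Theorem \ref{thm:StringStability_CP}, with adaptations to accommodate the two-dimensional controller state $\rho_i=(\rho_{1,i},\rho_{2,i})^T$.

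First, I would analyze the isolated subsystem obtained by setting $\psi^{i-1}_{\Delta p}=\psi^{i-1}_{\Delta v}=0$ in (\ref{eq:car_following_dynamics_i_closed_loop_VP}). Rather than working directly with the raw coordinates $(\Delta \tilde p_i,\Delta v_i,\rho_{1,i},\rho_{2,i})$, I would introduce the error variables
$e_i=\Delta p_i-\Delta p_i^{r}$ and $s_i=\Delta v_i-\Delta v_i^{vp,r}$, with $\Delta v_i^{vp,r}$ as in (\ref{eq:delta_v_ref_VP}). Substituting the control law (\ref{eq:control_input_VP}) and using the isolated $\rho$-dynamics from (\ref{eq:rho_dynamic_system_VP}), routine cancellation yields a block-triangular system: $\dot e_i=-K^{vp}_{\Delta p} e_i + s_i$, $\dot s_i=-e_i-K^{vp}_{\Delta v} s_i$, decoupled from $\dot\rho_{1,i}=-\lambda_1\rho_{1,i}+\rho_{2,i}$, $\dot\rho_{2,i}=-\lambda_2\rho_{2,i}$. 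The $(e_i,s_i)$ matrix has trace $-(K^{vp}_{\Delta p}+K^{vp}_{\Delta v})<0$ and determinant $K^{vp}_{\Delta p}K^{vp}_{\Delta v}+1>0$, hence Hurwitz, while the $\rho$-block is upper triangular with negative diagonal. Since the change of coordinates is linear and invertible, exponential stability of the isolated subsystem in $\tilde\chi_i$ follows.

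Next, to obtain the ISS estimate (\ref{eq:ISS_platoon_property_VP}), I would solve the Lyapunov equation $A_{cl}^T P + P A_{cl} = -Q$ (with $Q=I$) for the isolated closed-loop matrix $A_{cl}$ to get a quadratic Lyapunov function $V(\tilde\chi_i)=\tilde\chi_i^T P\tilde\chi_i$. Combining standard estimates $\underline\alpha|\tilde\chi_i|^2\le V\le\bar\alpha|\tilde\chi_i|^2$, $\dot V\le-|\tilde\chi_i|^2+2|P\tilde\chi_i|\,|g_{cl,i}|$ with the bound on the interconnection from the first inequality in Lemma \ref{lmm:psi_bounds} (invoked through Lemma \ref{lmm:fmacro_bounds}), namely $|a\psi^{i-1}_{\Delta p}+b\psi^{i-1}_{\Delta v}|\le c_\chi(a,b)\max_{j=0,\dots,i-1}|\tilde\chi_j|$ with $c_\chi(a,b)$ linear in $(a,b)$, the standard Sontag-type ISS argument produces (\ref{eq:ISS_platoon_property_VP}) with a linear gain $\gamma^{vp}(s)=\tilde\gamma^{vp}s$, where $\tilde\gamma^{vp}$ is proportional to $c_\chi(a,b)$.

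Finally, since $\tilde\gamma^{vp}\to 0$ as $(a,b)\to 0^{+}$, choosing $a,b>0$ sufficiently small gives $\tilde\gamma^{vp}\in(0,1)$. The second inequality in Lemma \ref{lmm:psi_bounds} simultaneously supplies the additive bound (\ref{eq:gcl_sum_bound_condition}) required for the second part of Theorem \ref{thm:StringStability}, and exponential stability of the isolated subsystem was shown in the first step. Invoking Theorem \ref{thm:StringStability} concludes Asymptotic String Stability. The main obstacle I anticipate is twofold: identifying the $(e_i,s_i)$ change of coordinates that exposes the block-triangular, parameter-free Hurwitz structure of the isolated dynamics, and making the dependence of $\tilde\gamma^{vp}$ on $(a,b)$ sufficiently explicit to justify reducing it below unity; both are somewhat tedious but not conceptually deep once the coordinate change is identified.
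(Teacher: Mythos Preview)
Your proposal is correct and follows essentially the same route as the paper. The paper's explicit Lyapunov function is precisely $W^{vp}_i=\tfrac12\bigl(e_i^2+s_i^2+\rho_{1,i}^2+\rho_{2,i}^2\bigr)$ in the very error coordinates $(e_i,s_i)$ you introduce, then rewritten as a quadratic form in $\tilde\chi_i$; the ensuing ISS computation, the use of Lemmas~\ref{lmm:fmacro_bounds}--\ref{lmm:psi_bounds}, the linear gain $\tilde\gamma^{vp}$ proportional to $a\gamma_{\Delta p}+b\gamma_{\Delta v}$, and the final appeal to Theorem~\ref{thm:StringStability} match your plan exactly. Your explicit identification of the block-triangular structure (and the resulting parameter-free Hurwitz argument for the $(e_i,s_i)$ block) is a slightly cleaner way to certify exponential stability than the paper's direct eigenvalue bookkeeping on $Q^{vp}$, but otherwise the two arguments coincide.
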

\begin{proof}

    See Appendix \ref{app:theorem_VP}.
    
\end{proof}

The mesoscopic controller in (\ref{eq:control_input_VP}) adopting a time-varying spacing policy is shown to ensure Asymptotic String Stability with respect to the closed loop system describing the vehicular platoon.

We applied the general framework described in Theorem  \ref{thm:StringStability} to two different spacing policies. Rather than the philosophical choice of the desired spacing policy, we remark that the differences on the proposed control action in (\ref{eq:control_input_CP})  and (\ref{eq:control_input_VP}) mainly rely on the possibility to have the functions in (\ref{eq:fmacro_pv}) directly affecting the control input. A proper choice of the parameters $a$ and $b$ in (\ref{eq:rho_dynamic_system_CP}) and (\ref{eq:rho_dynamic_system_VP}) will size the real-time contribution of the macroscopic functions (\ref{eq:psi_delta_p}) and (\ref{eq:psi_delta_v}) with respect to the filtered version given by $\rho_i$ in (\ref{eq:rho_dynamic_system}). The differences among the two approaches and the resulting control laws are better highlighted in the next section where some simulations are illustrated.

\section{Simulations}\label{sct:simulations}

The two introduced control strategies are simulated in Matlab\&Simulink. Based on the modeling in (\ref{eq:car_following_dynamics_i}), we consider a platoon of $N+1=31$ vehicles. The initial conditions for each vehicle are randomly generated in a neighborhood of the equilibrium point. It results $\mu_{\Delta p}\neq-\Delta\Bar{p},\mu_{\Delta v}\neq 0$ $\sigma^2_{\Delta p},\sigma^2_{\Delta v} \neq 0$. The constant reference distance is $\Delta\Bar{p} = 20$ m and the initial desired speed of the leading vehicle is $\Bar{v} = 14$ m/s. The vehicle speed is such that $0\leq v_i\leq 36$ m/s, and the acceleration is bounded such that $-4\leq u_i\leq 4$ m/s\textsuperscript{2}. The control parameters are chosen to obtain a value $\Tilde{\gamma}\approx 0.5$ for both controllers. To better stress the proposed controllers, we analyze the behavior of the system when a disturbance acts on the acceleration of vehicle $i=0$, and it is not communicated to vehicle $i=1$. Both controllers are simulated with the same simulation time of 1 minute, initial condition and perturbed leader vehicle. We split the simulation time in four phases:
\begin{enumerate}

    \item From $t=t_0=0$ s to $t=t_1=10$ s: the vehicles starts with initial conditions that are different from the desired speed and the desired distance. No disturbance is acting on the leader vehicle, and its desired speed is the initial one, i.e. $\Bar{v} = 14$ m/s.
    
    \item From $t=t_1=10$ s to $t=t_2=25$ s: the leader tracks a variable speed reference. From $t=10$ s to $t=20$ s the desired speed is $\Bar{v}=25$ m/s, while from $t=20$ s to $t=25$ s it is $\Bar{v}=20$ m/s.
    
    \item From $t=t_2=25$ s to $t=t_3=30$ s: a disturbance acts to the acceleration of the first vehicle $i=0$. At $t_2$ a positive pulse of amplitude $4$ m/s\textsuperscript{2} and length $5$ s is considered. The vehicle succeeds to properly counteract to it, but the control input of $i=0$ being saturated there is not the possibility to return to the desired speed. Since the disturbance is an external input, it is not communicated to the follower and could propagate along the platoon.
    
    \item From $t=t_3=35$ s to $t=t_4=60$ s: the leader tracks a variable speed reference while being subject to a sinusoidal disturbance acting on its acceleration. The disturbance has an amplitude of $2$ m/s and a frequency of $1$ rad/s. From $t=35$ s to $t=45$ s the desired speed is $\Bar{v}=14$ m/s, while from $t=45$ s to $t=60$ s it is $\Bar{v}=25$ m/s. 
    
\end{enumerate}

\noindent In the following, we use a color scale from light yellow to dark red to represent the set of vehicles in the figures, ranging from $i=0$ (light yellow) to $i=N$ (dark red).


\begin{figure}[h]
    \centering
    \includegraphics[width = 1\columnwidth]{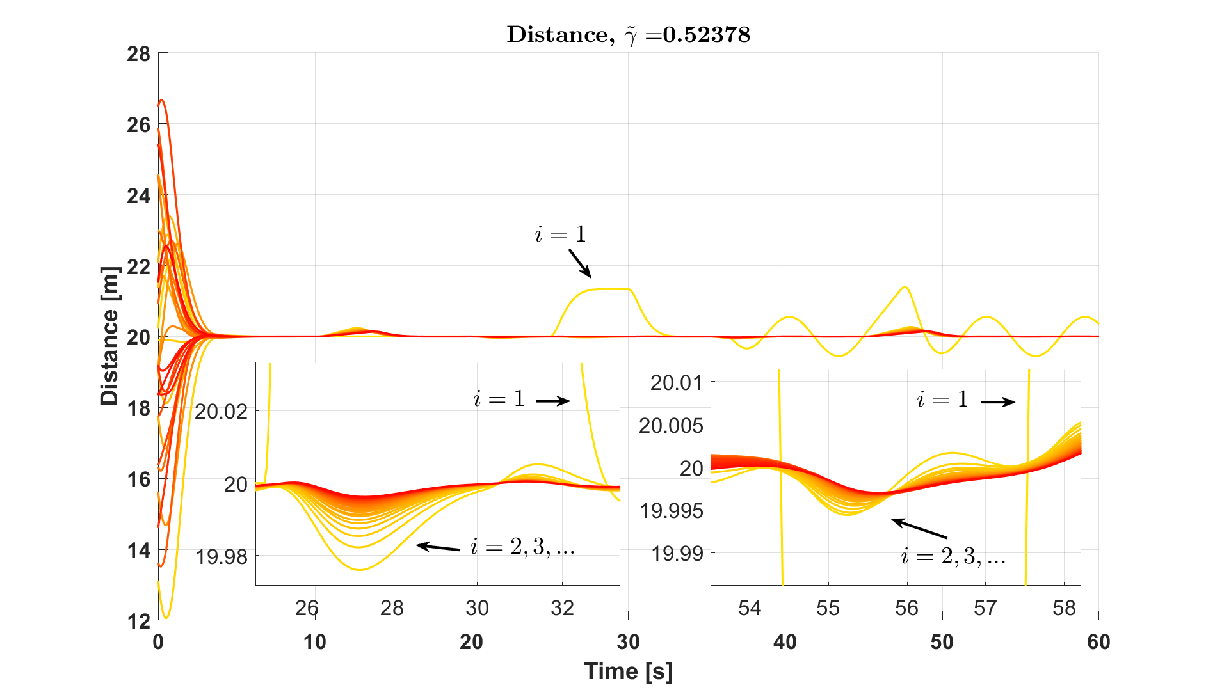}
    \caption{Control strategy for constant spacing policy: Distances. The color scale from light yellow to dark red represents the inter-vehicular distances between the vehicles of the platoon from the head pair $(0,1)$ to the tail one $(N-1,N)$.}
    \label{fig:distances_CP}
    \centering
    \includegraphics[width = 1\columnwidth]{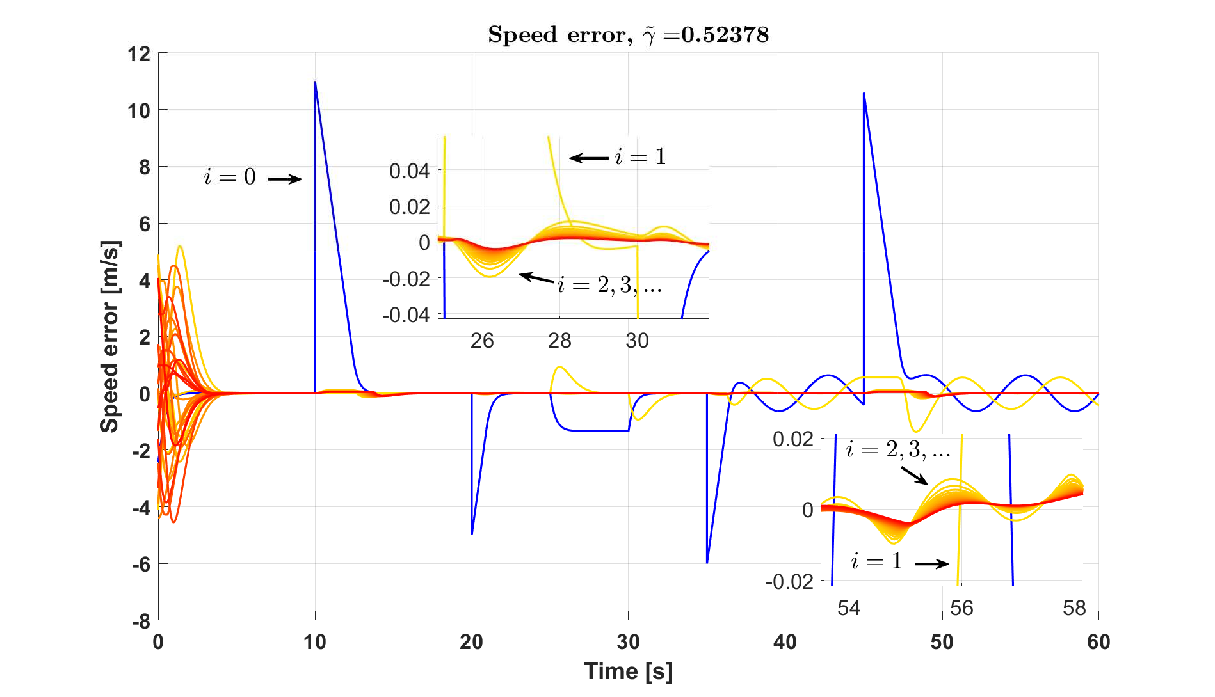}
    \caption{Control strategy for constant spacing policy: Speed Errors. The color scale from light yellow to dark red represents the inter-vehicular speed differences between the vehicles of the platoon from the head pair $(0,1)$ to the tail one $(N-1,N)$. The blue line represents the speed difference between vehicle $i=0$ and the reference speed $\Bar{v}$.}
    \label{fig:speed_differences_CP}
    \centering
    \includegraphics[width = 1\columnwidth]{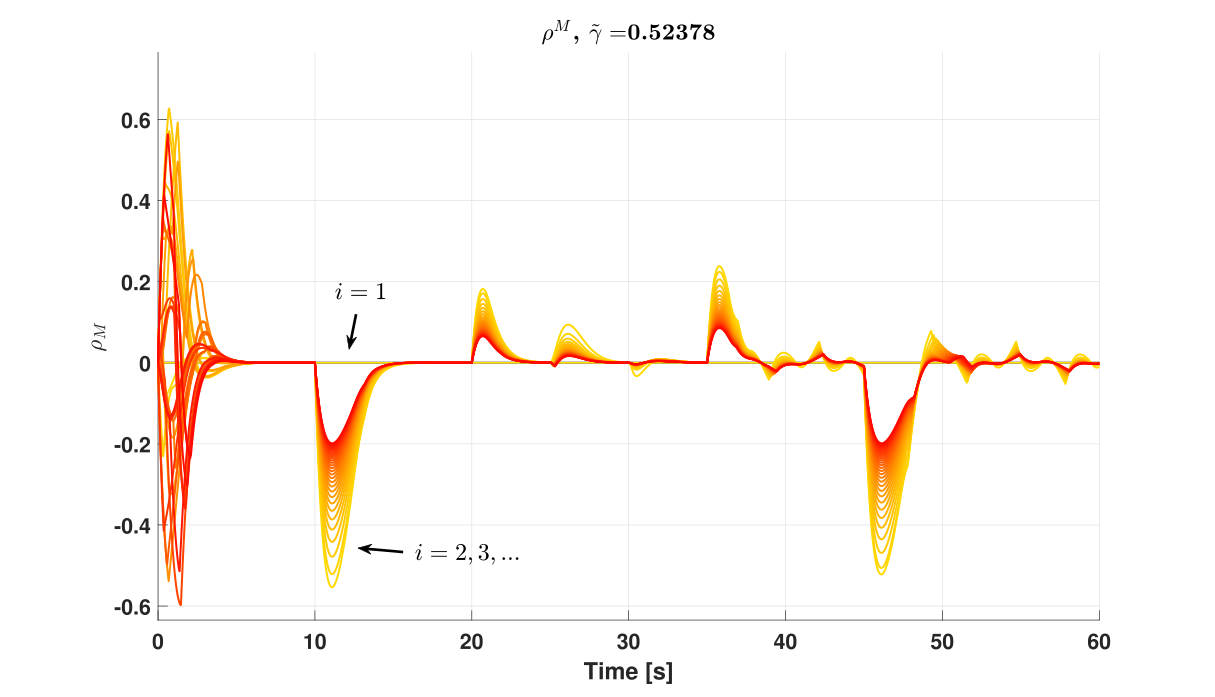}
    \caption{Control strategy for constant spacing policy: $\rho_i^M$. The color scale from light yellow to dark red represents the variable $\rho_i^M$ associated to the $i$-th vehicles of the platoon from $i=0$ to $i=N$.}
    \label{fig:rho_M_CP}
\end{figure}

\subsection{Control strategy for constant spacing policy}

Figures \ref{fig:distances_CP}, \ref{fig:speed_differences_CP} and \ref{fig:rho_M_CP} show, respectively, the inter-vehicular distances ($p_{i-1}-p_i = -\Delta p_i$), inter-vehicular speed differences ($v_{i-1}-v_i = -\Delta v_i$) and the macroscopic variables ($\rho_i^M$) profiles for each vehicle of the platoon when the control input (\ref{eq:control_input_CP}) is implemented. 

In the first phase, the vehicles are shown to converge to the desired distance and speed, corresponding to zero inter-vehicular speed errors, with a fast transient. 

In the second phase, the desired speed profile has high steps resulting in high peaks at $10$ s and $20$ s for $\Bar{v}-v_0 = -\Delta v_0$ (see Figure \ref{fig:speed_differences_CP}, blue line). However, the vehicles succeed to track the variable speed profile, as shown in Figures \ref{fig:distances_CP} and \ref{fig:speed_differences_CP}. Figure \ref{fig:rho_M_CP} depicts how the macroscopic function $\rho_i^M$ catches the variations due to the reference change for the leading vehicle. We remark that the dark red line, corresponding to vehicle $N$, is the most attenuated one due to the platoon damping action. Indeed, the followers filter their action due to the car-following interaction with respect to the whole platoon behavior. The consideration of such macroscopic information causes the vehicles to lightly adapt their inter-vehicular distance (see Figure \ref{fig:distances_CP}) and speed difference reference (see Figure \ref{fig:speed_differences_CP}), in order to prevent the perturbation given by the reference speed variation to be amplified.

In the third phase, vehicle $i=0$ is not able to counteract the acting pulse disturbance and is therefore forced to accelerate (see Figure \ref{fig:speed_differences_CP}, blue line). In this phase, the controller of $i=1$ does not know the correct value of $u_0$. However it converges to the same speed of $i=0$ after a small transient of three seconds (see Figure  \ref{fig:speed_differences_CP}). Since no macroscopic information is available to it, it does not succeed to perfectly track the desired distance when the positive disturbance acts on the leader acceleration in $[t_2,t_3)$ (see Figure \ref{fig:distances_CP}). Finally, at $t=30$ s the disturbance is not active anymore and the leader can restore its desired speed. Also, $i=1$ receives correct information about its leader acceleration and is able to return to the ideal distance. The dynamical evolution of distances and speed differences of the remaining vehicles in the platoon during the generated transients after $t_2$ and $t_3$ catches the contribution of the macroscopic information. Indeed, it is possible to remark an anticipatory behavior since the vehicles along the platoon balance the propagating error by anticipating their action.  In Figure \ref{fig:rho_M_CP}, it is possible to see how the evolution of $\rho_i^M$ reacts when the disturbance take places, and its attenuation.

In the fourth phase, vehicle $i=0$ successfully tracks its time-varying reference, although being subject to a sinusoidal disturbance that does not allow to reach a constant speed in steady state (see Figure \ref{fig:speed_differences_CP}, blue line). As for the pulse disturbance in the third phase, vehicle $i=1$ is not able to properly counteract to vehicle $i=0$ because of the incorrect received value for $u_0$. Consequently, the distance profile for the first pair of vehicles is not the desired one, as shown in Figure \ref{fig:distances_CP}. However, the increase  of considered information along the platoon let the next following vehicles have a better response with respect to the first follower. Figure \ref{fig:distances_CP} clearly shows that the remaining vehicles composing the platoon have a smoother distance profile. They are capable to filter the sinusoidal disturbance thanks to the utilisation of macroscopic information. Indeed, consistent variation from the desired distance profile take place only when both the sinusoidal disturbance and the step one affect together the first vehicle reference. The utilisation of the macroscopic information results in a precious filter that creates an anticipatory behavior with respect to undesired changes taking place along the platoon. We remark that, as expected,  the last vehicle is the one that is less affected by the sinusoidal disturbance. This is due to the propagation of the  contribution of the macroscopic information, which decreases along the platoon as shown in Figure \ref{fig:rho_M_CP}.

\subsection{Control strategy for variable spacing policy}

\begin{figure}[h]
    \centering
    \includegraphics[width = 1\columnwidth]{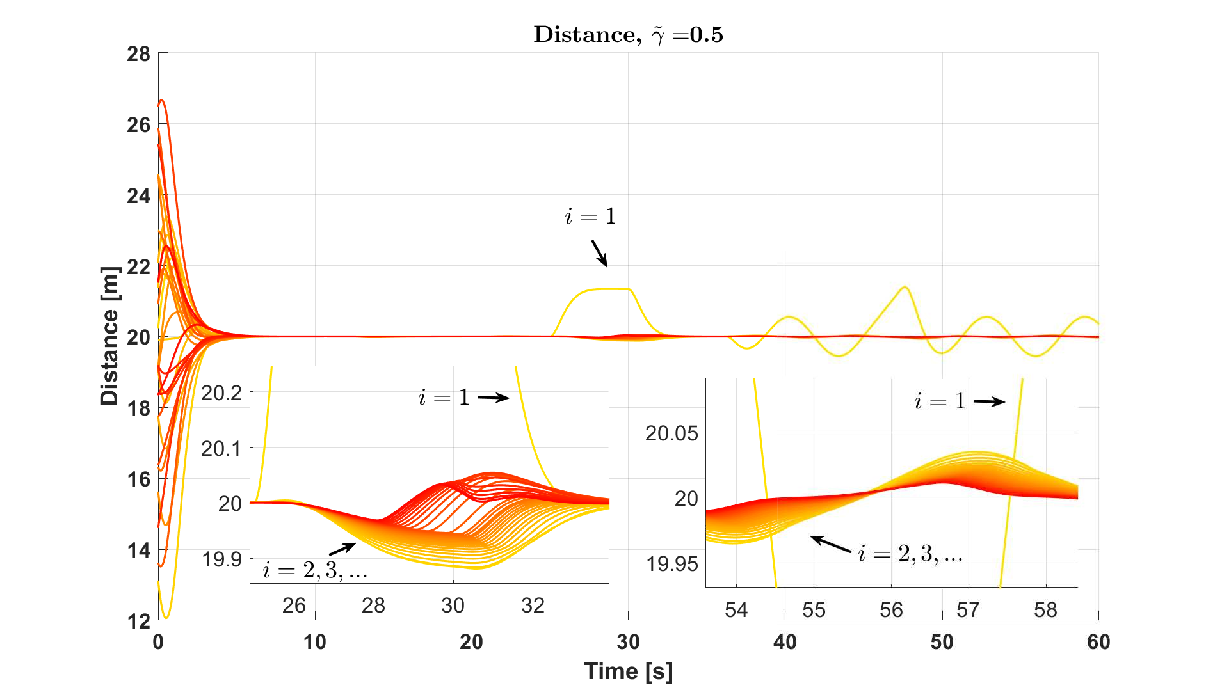}
    \caption{Control strategy for variable spacing policy: Distances. The color scale from light yellow to dark red represents the inter-vehicular distances between the vehicles of the platoon from the head pair $(0,1)$ to the tail one $(N-1,N)$.}
    \label{fig:distances_VP}
    \centering
    \includegraphics[width = 1\columnwidth]{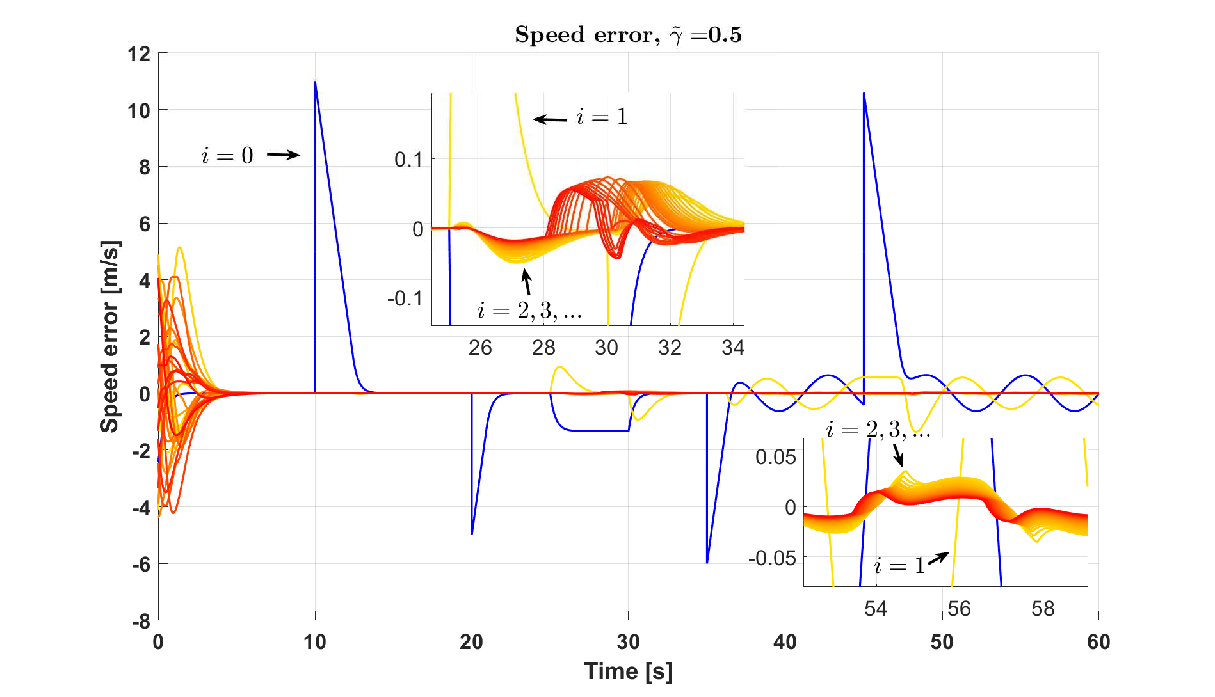}
    \caption{Control strategy for variable spacing policy: Speed Errors. The color scale from light yellow to dark red represents the inter-vehicular speed differences between the vehicles of the platoon from the head pair $(0,1)$ to the tail one $(N-1,N)$. The blue line represents the speed difference between vehicle $i=0$ and the reference speed $\Bar{v}$.}
    \label{fig:speed_differences_VP}
    \centering
    \includegraphics[width = 1\columnwidth]{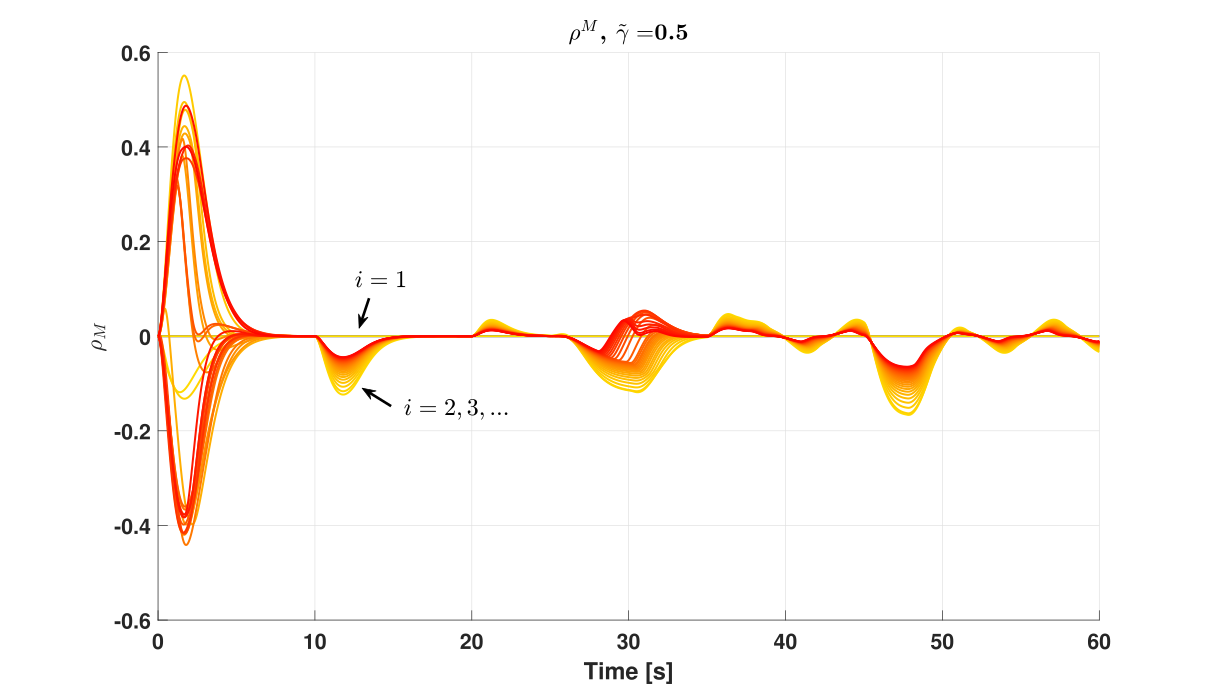}
    \caption{Control strategy for variable spacing policy: $\rho_i^M$. The color scale from light yellow to dark red represents the variable $\rho_i^M$ associated to the $i$-th vehicles of the platoon from $i=0$ to $i=N$.}
    \label{fig:rho_M_VP}
\end{figure}

Figures \ref{fig:distances_VP}, \ref{fig:speed_differences_VP} and \ref{fig:rho_M_VP} show the platoon dynamical evolution when the control input (\ref{eq:control_input_VP}) is implemented. The simulation time is split in the same 4 phases of the constant spacing policy case. 

In the first phase, the vehicles are shown to converge to the desired speed and distance with a fast transient, similarly to the constant spacing policy case.

Then, during the second phase, the vehicles succeed to track the variable speed profile and to remain at the desired distance. Since the macroscopic function is directly used to adapt the inter-vehicular distance, we see in Figure \ref{fig:rho_M_VP}  at $10$ s and $20$ s a smaller overshoot with respect to Figure \ref{fig:rho_M_CP} describing the constant spacing policy case. Consequently, Figures \ref{fig:distances_VP} and \ref{fig:speed_differences_VP} show a better performance for the distance and speed difference profile with respect to such overshoot. The perturbations due to the reference speed variation for the first vehicle are better attenuated in the variable spacing policy case.

In the third phase, it is then possible to better appreciate the effect of using the macroscopic variable to vary the desired distance. Indeed, when the disturbance acts on $i=0$ causing $i=1$  not to perfectly track the desired distance, the remaining set of followers adapt their inter-vehicular distances such that the perturbation is reduced through the string (see Figure \ref{fig:distances_VP}). In the variable spacing policy case, the distance adaptation is higher with respect to the constant one. Figure \ref{fig:speed_differences_VP} shows that the vehicles along the speed difference error scales in intensity along the platoon, resulting in an anticipatory action. Furthermore, when the disturbance disappear at $30$ s, the platoon rapidly converges to the correct reference speed. Since we exploit the macroscopic variable to vary the desired distance, we remark a similar evolution in Figure \ref{fig:distances_VP} with respect to Figure \ref{fig:rho_M_VP} in the interval $[t_2,t_3)$. It means that the macroscopic information described by $\rho_i^M$ is correctly used to stabilize the platoon.

In the fourth phase, an even better dynamical behavior is produced with respect to the performing one of the constant spacing policy. Indeed, the filtering of the sinusoidal disturbance is more effective for both distance and speed difference references, as shown in Figures \ref{fig:distances_VP} and \ref{fig:speed_differences_VP}). Indeed, expect for the vehicle $i=1$ that cannot correctly track the desired distance due to the lack of macroscopic information, the rest of the platoon is able to attenuate the propagating oscillations and to counteract to them in a more performing way, as shown in the zooms of Figures \ref{fig:distances_VP} and \ref{fig:speed_differences_VP}. Then, also in this phase the vehicles in the platoon correctly anticipate their control action to neutralise the disturbance deriving from the head vehicle.

\subsection{Comments on simulations}

The proposed control laws in (\ref{eq:control_input_CP}) and (\ref{eq:control_input_VP}) manage to safely control a platoon of vehicles and exploit the information resulting from the macroscopic variable. The control inputs provide transient harmonization on the whole traffic throughput while ensuring String Stability. The dynamical evolution shows a reduction of the oscillations propagation along the platoon, in the nominal case as well as in the presence of an active external disturbance, in both cases of constant and variable spacing policies. The utilization of variables that aggregate macroscopic information proves to be a powerful tool that acts as a smoothing filter with respect to disturbances in the distance and speed error references. In particular, it empowers the constant spacing policy with the String Stability property without the knowledge of specific data, such as the information about the platoon leader vehicle. Indeed, for a constant spacing policy that does not make use of any aggregate information, it is necessary to share both platoon leader velocity and acceleration to ensure String Stability, see e.g. \cite{Hedrick2004strings}. Moreover, the time-varying spacing policy depending on macroscopic information may increase the traffic throughput, on the contrary to classical results as in \cite{Ploeg2014TCST}. Indeed, it only modifies  the transients, and consequently allows for a reduction in the steady-state inter-vehicular distance with no dependence on the velocity profile. 

Future work will focus on reducing the amount of information exchanged to propagate macroscopic quantities, by investigating V2I frameworks with aggregate and recursive variables. Moreover, the combination of estimation techniques for the macroscopic quantities with the developed control framework will be investigated.

\vspace{-50pt}
\begin{center}
\begin{table}
    \centering
    \caption{The control parameters: constant spacing policy.}
    \begin{tabular}{ | c | c | c | c |c | c | }
        \hline
        Parameter  & Value & Parameter & Value & Parameter & Value  \\ \hline
        $K_{\Delta p}$ & $1$ & $K_{\Delta v}$ & $2$  & $\Upsilon$ & $0.9$   \\ \hline
        $\lambda$ & $1.5$ & $a$ & $0.5$ & $b$ & $0.5$  \\  \hline
        $\gamma_{\Delta p}$ & $0.5$ & $\gamma_{\Delta v}$ & $0.5$ & $\Tilde{\gamma}$ & $0.52$\\  
        \hline
    \end{tabular}\label{table:parameters_CVP}
\end{table}
\end{center}

\begin{center}
\begin{table}
\centering
    \caption{The control parameters: variable spacing policy.}
    \begin{tabular}{ | c | c | c | c |c | c | }
    \hline
  Parameter  & Value & Parameter & Value & Parameter & Value  \\ \hline
  $K_{\Delta p}$ & $1$ & $K_{\Delta v}$ & $2$  & $\Upsilon$ & $0.9$   \\ \hline
  $\lambda_1,\lambda_2$ & $1.5$ & $a$ & $1$ & $b$ & $0.2$  \\  \hline
  $\gamma_{\Delta p}$ & $0.5$ & $\gamma_{\Delta v}$ & $0.5$ & $\Tilde{\gamma}$ & $0.5$\\  \hline
    \end{tabular}\label{table:parameters_CV}
    \end{table}
\end{center}

\vspace{10pt}
\section{Conclusion}\label{sct:conclusion}

This paper describes the capability to consider macroscopic variables for improving String Stability performance of a platoon of CACC autonomous vehicles. As the variance of microscopic quantities is related to the macroscopic density, the proposed stability analysis opens the possibility to properly control a platoon by propagating macroscopic density information. Two control laws, achieving Asymptotic String Stability, with different spacing policies and based on information obtained by V2V communication have been proposed. The improvements resulting from the consideration of macroscopic information are shown by simulation results. The proposed mesoscopic control laws produce an anticipatory behaviour, which provides a better transient harmonization. 

Future work will focus on the possibility of considering only V2I communications for macroscopic quantities sharing, and on the extension of the proposed framework in a mixed traffic situation with non-communicating vehicles. Also, more complex models that include non-idealities, such as delays and model uncertainties, will be investigated.

\bibliographystyle{ieeetr}
\bibliography{biblioTraffic_20200915.bib}  

\appendix
\section{Appendices}

\subsection{Proof of Lemma \ref{lmm:fmacro_bounds}}\label{app_proofLemma1}
\begin{proof}
    Let us prove the property in (\ref{eq:psi_max_bound}). It results:
    \begin{align}
        \nonumber a\psi^i_{\Delta p}+b\psi^i_{\Delta v} &\leq a|\psi^i_{\Delta p}|+b|\psi^i_{\Delta v}| \\
        \nonumber &\leq a c_{\Delta p}\max_{j=0,...,i}|\Delta\Tilde{p}_j| + b c_{\Delta v}\max_{j=0,...,i}|\Delta v_j| \\
        \nonumber &\leq a c_{\Delta p}\max_{j=0,...,i}|\Tilde{\chi}_j| + b c_{\Delta v}\max_{j=0,...,i}|\Tilde{\chi}_j| \\
        &\leq (a c_{\Delta p} + b c_{\Delta v})\max_{j=0,...,i}|\Tilde{\chi}_j|
    \end{align}
    where $c_{\chi} = a c_{\Delta p} + b c_{\Delta v}$.
    
    The inequality in (\ref{eq:psi_sum_bound}) can be proved using similar arguments:
    \begin{align}
       \nonumber  a\psi^i_{\Delta p}+&b\psi^i_{\Delta v} \leq a|\psi^i_{\Delta p}|+b|\psi^i_{\Delta v}| \\
        \nonumber &\leq a \sum_{j=0}^i k^{\Delta p}_{ij}|\Delta\Tilde{p}_j| + b \sum_{j=0}^i k^{\Delta v}_{ij} |\Delta v_j| \\
         &\leq \sum_{j=0}^i \left| \left[
        \begin{array}{ccc}
            a k^{\Delta p}_{ij} & 0 & 0 \\
            0 & b k^{\Delta v}_{ij} & 0 \\
            0 & 0 & 0
        \end{array}
        \right] \Tilde{\chi}_j \right|_1 \leq \sum_{j=0}^i k^{\chi}_{ij}|\Tilde{\chi}_j|
    \end{align}
    where $k^{\chi}_{ij} = \sqrt{2+r}\max\{ a k^{\Delta p}_{ij} , b k^{\Delta v}_{ij} \}$.
\end{proof}

\subsection{Proof of Lemma \ref{lmm:psi_bounds}}\label{app_proofLemma2}

\begin{proof}
    Because of the similarity of conditions in (\ref{eq:psi_pv_properties}), we prove only the inequality w.r.t. $\Delta p$. First we recall the following variance property: let $l\in\{1,...,m\}$ and $y_l\in\mathbb{R}, \ |y_l| < \infty \ \forall \ l$. The variance $\sigma^2_y$ with respect to the set of values $y_l$ is such that
    \begin{equation}\label{eq:variance_property}
        \sigma_y^2 \leq \frac{1}{4}(\max_l y_l - \min_l y_l)^2.
    \end{equation}
    Then, for the macroscopic function $\psi^i_{\Delta p}$ the following inequality is proved:
    \begin{align}
        \nonumber |\psi^i_{\Delta p}| &\leq \gamma_{\Delta p}\sqrt{\sigma^2_{\Delta p,i}} \\
        \nonumber &\leq \frac{1}{2}\gamma_{\Delta p}|\max_{j=0,...,i}\Delta\Tilde{p}_i - \min_{j=0,...,i}\Delta\Tilde{p}_i| \\
        &\leq \gamma_{\Delta p}\max_{j=0,...,i}|\Delta\Tilde{p}_i|. 
    \end{align}
    It follows that $|\psi^i_{\Delta v}| \leq \gamma_{\Delta v}\max_{j=0,...,i}|\Delta v_j|$. Moreover,

    \begin{align}
        \nonumber |\psi^i_{\Delta p}| &\leq \gamma_{\Delta p}\sqrt{\sigma^2_{\Delta p,i}} \\
        \nonumber &= \gamma_{\Delta p}\left( \frac{1}{i+1}\sum_{j=0}^{i} \Delta\Tilde{p}_j^2 - \frac{1}{(i+1)^2}\left( \sum_{j=0}^{i} \Delta\Tilde{p}_j\right)^2  \right)^{\frac{1}{2}} \\
        \nonumber &\leq \gamma_{\Delta p}\frac{1}{\sqrt{i+1}}\left( \sum_{j=0}^{i} \Delta\Tilde{p}_j^2 \right)^{\frac{1}{2}} \\
        &\leq \gamma_{\Delta p}\frac{1}{\sqrt{i+1}} \sum_{j=0}^{i}|\Delta\Tilde{p}_j|
    \end{align}
    where we have used the inequality $|x|_2 \leq |x|_1$. In the same way we can prove that 
    \begin{equation}
        |\psi^{i}_{\Delta v}| \leq \gamma_{\Delta v}\frac{1}{\sqrt{i+1}} \sum_{j=0}^{i} |\Delta v_j|
    \end{equation}

\end{proof}

\subsection{Proof of Theorem \ref{thm:StringStability_CP}}\label{app:theorem_CP}

\begin{proof}
    In the first part of the proof we show that the origin of each isolated subsystem (i.e. $g_{cl,i} = 0, \ \forall \ i$) is exponentially stable. Let us consider the candidate Lyapunov function $W^{cp}_i = W^{cp}(\Tilde{\chi}_i)$ for the $i$-th dynamical system $\Tilde{\chi}_i$, $i\in\mathcal{I}_N^0$, as:
    \begin{align}\label{eq:Wi_CP}
        \nonumber W^{cp}(\Tilde{\chi}_i) &= \frac{1}{2}(\Delta p_i + \Delta\Bar{p})^2 + \frac{1}{2}(\Delta v_i + K^{cp}_{\Delta p}(\Delta p_i + \Delta\Bar{p}))^2 \\
        &\quad +\frac{1}{2}\rho^2_{i} \\
        &= \frac{1}{2} \Tilde{\chi}_i^T \underbrace{ \left[ \begin{array}{ccc}
            1+(K^{cp}_{\Delta p})^2 & 2K^{cp}_{\Delta p} & 0 \\
            0 & 1 & 0 \\
            0 & 0 & 1
        \end{array} \right] }_{P^{cp}} \Tilde{\chi}_i
    \end{align}
    Function $W^{cp}_i$ satisfies the inequalities
    \begin{equation}\label{eq:Wi_bound_CP}
        \underline{\alpha}^{cp}|\Tilde{\chi}_i|^2 \leq W_i^{cp} \leq \Bar{\alpha}^{cp}|\Tilde{\chi}_i|^2,
    \end{equation}
    where, by defining with $\lambda_{\min}(\cdot)$ and $\lambda_{\max}(\cdot)$ the minimum and maximum eigenvalues of a matrix, we obtain
    \begin{align}\label{eq:Wi_alpha_CP}
        \underline{\alpha}^{cp} &= \frac{1}{2}\lambda_{\min}(P^{cp}) = \frac{1}{2}, \\
        \Bar{\alpha}^{cp} &= \frac{1}{2}\lambda_{\max}(P^{cp}) = \frac{1}{2}(1 + (K^{cp}_{\Delta p})^2).
    \end{align}
    The time derivative of $W^{cp}_i$ in (\ref{eq:Wi_CP}) is:
    \begin{align}\label{eq:dot_Wi_CP_1}
        \nonumber \dot{W}^{cp}_i & =  -K^{cp}_{\Delta p}( \Delta p_i + \Delta\Bar{p} )^2 - K^{cp}_{\Delta v}[ \Delta v_i + K^{cp}_{\Delta p}( \Delta p_i+\Delta\Bar{p} ) ]^2 \\  
        \nonumber & \quad -\lambda\rho_i^2 - \rho_i[ \Delta v_i+K^{cp}_{\Delta p}(\Delta p_i+\Delta\Bar{p}) ] \\
  \nonumber       & = -\Tilde{\chi}_i^T \underbrace{ \left[ \begin{array}{ccc}
            p & 2 K^{cp}_{\Delta v}K^{cp}_{\Delta p} & K^{cp}_{\Delta p} \\
            0 & K^{cp}_{\Delta v} & 1 \\
            0 & 0 & \lambda
        \end{array} \right] }_{Q^{cp}} \Tilde{\chi}_i \\
        &\leq -\alpha^{cp}|\Tilde{\chi}_i|^2 
    \end{align}
    where $p = K^{cp}_{\Delta p}(1+K^{cp}_{\Delta v}K^{cp}_{\Delta p})$ and
    \begin{align}
        \alpha^{cp} &= \lambda_{\min}(Q^{cp}) = \min\left\{ K^{cp}_{\Delta v} , K^{cp}_{\Delta p}(1+K^{cp}_{\Delta v} K^{cp}_{\Delta p}) , \lambda \right\}. \label{eq:dot_Wi_alphatilde_CP}
    \end{align}
    Since $\alpha^{cp} > 0$ for hypothesis, then $W^{cp}_i$ is a Lyapunov function for the $i$-th isolated subsystem and the exponential stability is proven (see \cite[Theorem~4.10]{B_khalil_2002}).
    
    We go on by proving the ISS property in (\ref{eq:ISS_platoon_property_CP}). As before, let us consider the function $W^{cp}_i$ for the $i$-th system and its time derivative $\dot{W}^{cp}_i$. In this case, we consider an interconnection term $g_{cl,i}\neq0$. The following inequality is satisfied:
    \begin{align}\label{eq:dot_Wi_CP_2}
        \nonumber \dot{W}^{cp}_i & =  -K^{cp}_{\Delta p}( \Delta p_i + \Delta\Bar{p} )^2 - K^{cp}_{\Delta v}( \Delta v_i \\  
        \nonumber & \quad + K^{cp}_{\Delta p}( \Delta p_i+\Delta\Bar{p} ) )^2 -\lambda\rho_i^2 \\
        \nonumber & \quad - \rho_i( \Delta v_i+K^{cp}_{\Delta p}(\Delta p_i+\Delta\Bar{p}) \\
        \nonumber & \quad  + a\psi_{\Delta p}^{i-1} + b\psi_{\Delta v}^{i-1} ) \\
        & \leq -\alpha^{cp} |\Tilde{\chi}_i|^2 + |\rho_i|( a|\psi_{\Delta p}^{i-1}| + b|\psi_{\Delta v}^{i-1}| ) .
    \end{align}
    Define 
    \begin{equation}\label{eq:constant_definition_CP}
        d = a\gamma_{\Delta p}+b\gamma_{\Delta v} > 0, \ \Upsilon^{cp}\in(0,1) .  
    \end{equation}
    By Lemmas \ref{lmm:fmacro_bounds} and \ref{lmm:psi_bounds}, it results
    \begin{equation}
        |g_{cl,i}| \leq \max_{j=0,...,i-1}|\Tilde{\chi}_j|, \ \ |g_{cl,i}| \leq \sum_{j=0}^i k_{ij}|\Tilde{\chi}_j|.
    \end{equation}
    Then,
    \begin{align}\label{eq:dot_Wi_CP_3}
        \nonumber \dot{W}_i &\leq -\alpha^{cp}|\Tilde{\chi}_i|^2 + d|\Tilde{\chi}_i|\max_{j=0,...,i-1}|\Tilde{\chi}_j| + \alpha^{cp}\Upsilon^{cp} - \alpha^{cp}\Upsilon^{cp} \\
        &\leq -(1-\Upsilon^{cp})\alpha^{cp}|\Tilde{\chi}_i|^2, \quad \forall \  |\Tilde{\chi}_i|\geq\frac{d}{\alpha^{cp}\Upsilon^{cp}}\max_{j=0,...,i-1}|\Tilde{\chi}_j|.
    \end{align}
    The condition in (\ref{eq:dot_Wi_CP_3}) satisfies the ISS requirements. Consequently,  the inequality in (\ref{eq:ISS_platoon_property_CP}) is verified according to \cite[Theorem~4.19]{B_khalil_2002}. Moreover, 
    \begin{equation}\label{eq:gammatilde_CP}
        \gamma^{cp}(s) = \Tilde{\gamma}^{cp}s\ \  \forall \  s \geq 0, \ \ \  \Tilde{\gamma}^{cp} = \sqrt{\frac{\Bar{\alpha}^{cp}}{\underline{\alpha}^{cp}}}\frac{d}{\alpha^{cp}\Upsilon^{cp}} > 0.
    \end{equation}
    Since the parameters $a,b \leq 0$ in the dynamics of $\rho_i$ in (\ref{eq:rho_dynamic_system}) can be arbitrarily selected, the constant $d$ defined in (\ref{eq:constant_definition_CP}) can be chosen such that $\Tilde{\gamma}^{cp}$ in (\ref{eq:gammatilde_CP}) belongs to $(0,1)$.
    
    On the basis of Theorem \ref{thm:StringStability}, Asymptotic String Stability of the platoon can be obtained by using an appropriately chosen function describing macroscopic information.
\end{proof}

\subsection{Proof of Theorem \ref{thm:StringStability_VP}}\label{app:theorem_VP}

\begin{proof}
    
    In the first part of the proof we focus on showing that the origin of each isolated subsystem (i.e. $g_{cl,i} = 0, \ \forall i$) is exponentially stable. Let us consider the candidate Lyapunov function $W^{vp}_i = W^{vp}(\Tilde{\chi}_i)$ for the $i$-th dynamical system $\Tilde{\chi}_i$, $i\in\mathcal{I}_N^0$, as:
    \begin{align}\label{eq:Wi_VP}
        \nonumber & W^{vp}_i = \frac{1}{2}(\Delta p_i - \Delta p_i^r)^2 + \frac{1}{2}(\Delta v_i - \Delta v_i^{vp,r})^2 + \\
        \nonumber &\quad\quad\quad  + \frac{1}{2}\rho_{1,i}^2 + \frac{1}{2}\rho_{2,i}^2 \\
        &\: = \frac{1}{2} \Tilde{\chi}_i^T \underbrace{ \left[ 
            \begin{array}{cccc}
                1+(K_{\Delta p}^{vp})^2 & p_1 & p_2 & p_1 \\
                0 & 1 & p_3 & 2 \\
                0 & 0 & 2+(\lambda_1-K_{\Delta p}^{vp})^2 & p_3 \\
                0 & 0 & 0 & 2
            \end{array} \right] }_{P^{vp}} \Tilde{\chi}_i
    \end{align}
    where $p_1 = 2K_{\Delta p}^{vp}, \ p_2 = 2(1+(K^{vp}_{\Delta p})^2-\lambda_1 K_{\Delta p}^{vp}), \ p_3 = 2(K_{\Delta p}^{vp}-\lambda_1)$. Function $W^{vp}_i$ satisfies the inequalities
    \begin{equation}\label{eq:Wi_bound_VP}
        \underline{\alpha}^{vp}|\Tilde{\chi}_i|^2 \leq W_i^{vp} \leq \Bar{\alpha}^{vp}|\Tilde{\chi}_i|^2
    \end{equation}
    \begin{align}\label{eq:Wi_alpha_VP}
        \underline{\alpha}^{vp} &= \frac{1}{2}\lambda_{\min}(P^{vp}) = \frac{1}{2}, \\
        \nonumber \Bar{\alpha}^{vp} &= \frac{1}{2}\lambda_{\max}(P^{vp}) \\
        &= \frac{1}{2}\max \Big\{ 1+(K^{vp}_{\Delta p})^2 , 2+(\lambda_1-K_{\Delta p}^{vp})^2 \Big\},
    \end{align}
    The time derivative of $W^{vp}_i$ in (\ref{eq:Wi_VP}) is:
    \begin{align}\label{eq:dot_Wi_VP_1}
        \nonumber &\dot{W}^{vp}_i = -K_{\Delta p}^{vp}( \Delta p_i - \Delta p_i^r )^2 - K_{\Delta v}^{vp}( \Delta v_i - \Delta v_i^{vp,r} )^2 + \\ 
        \nonumber &\: - \lambda_1\rho_{1,i}^2 - \lambda_2\rho_{2,i}^2 + \rho_{1,i}\rho_{2,i} \\
        \nonumber &\: = -\Tilde{\chi}_i^T \underbrace{ \left[ 
        \begin{array}{cccc}
            q_1 & 2K_{\Delta p}^{vp}K_{\Delta v}^{vp} & q_2 & 2K_{\Delta p}^{vp}K_{\Delta v}^{vp}  \\
            0 & K_{\Delta v}^{vp} & q_3 & 2K_{\Delta v}^{vp} \\
            0 & 0 & q_4 & q_5 \\
            0 & 0 & 0 & \lambda_2+K_{\Delta v}^{vp}
        \end{array}
        \right] }_{Q^{vp}} \Tilde{\chi}_i^T \\
        &\: \leq -\alpha^{vp}|\Tilde{\chi}_i|^2
    \end{align}
    with $q_1 = K_{\Delta p}^{vp}(1+K_{\Delta p}^{vp}K_{\Delta v}^{vp})$, $q_2 = 2K_{\Delta p}^{vp}(1+K_{\Delta v}^{vp}(K_{\Delta p}^{vp}-\lambda_1))$, $q_3 = 2K_{\Delta v}^{vp}(K_{\Delta p}^{vp}-\lambda_1)$, $q_4 = K_{\Delta p}^{vp}+\lambda_1+K_{\Delta v}^{vp}(\lambda_1-K_{\Delta p}^{vp})^2$, $q_5 = 1-2K_{\Delta v}^{vp}(K_{\Delta p}^{vp}-\lambda_1)$, and 
    \begin{align}
        \alpha^{vp} &= \lambda_{\min}(Q^{vp}) = \min\left\{ q_1 , K^{vp}_{\Delta v} , q_4 , \lambda_2+K^{vp}_{\Delta v} \right\}. \label{eq:dot_Wi_alphatilde_VP}  
    \end{align}
    Since $\alpha^{vp} > 0$, then $W^{vp}_i$ is a Lyapunov function for the $i$-th isolated subsystem and exponential stability is proven (see \cite[Theorem~4.10]{B_khalil_2002}).
    
    We move on to prove the ISS property in (\ref{eq:ISS_platoon_property_VP}). Let us consider the function $W^{vp}_i$ for the $i$-th system and its time derivative $\dot{W}^{vp}_i$. In this case, we consider an interconnection term $g_{cl,i}\neq0$. The following inequality is satisfied:
    \begin{align}\label{eq:dot_Wi_VP_2}
        \nonumber &\dot{W}^{vp}_i = -K_{\Delta p}^{vp}( \Delta p_i - \Delta p_i^r )^2 - K_{\Delta p}^{vp}( \Delta v_i - \Delta v_i^{vp,r} )^2 + \\ 
        \nonumber &\: - \lambda_1\rho_{1,i}^2 - \lambda_2\rho_{2,i}^2 + \rho_{1,i}\rho_{2,i} + \rho_{2,i}(a\psi^{i-1}_{\Delta p} + b\psi^{i-1}_{\Delta v}) \\
        &\:\: \leq -\alpha^{vp}|\Tilde{\chi}_i|^2 + |\Tilde{\chi}_i|(a|\psi^{i-1}_{\Delta p}| + b|\psi^{i-1}_{\Delta v}|).
    \end{align}
    Define $d$ as in (\ref{eq:constant_definition_CP}) and $\Upsilon^{vp}\in(0,1)$.
    By Lemmas \ref{lmm:fmacro_bounds} and \ref{lmm:psi_bounds}:
    \begin{align}\label{eq:dot_Wi_VP_3}
        \nonumber \dot{W}_i &\leq -\alpha^{vp}|\Tilde{\chi}_i|^2 + d |\Tilde{\chi}_i|\max_{j=0,...,i-1}|\Tilde{\chi}_j| + \alpha^{vp}\Upsilon^{vp} - \alpha^{vp}\Upsilon^{vp} \\
        &\leq -(1-\Upsilon^{vp})\alpha^{vp}|\Tilde{\chi}_i|^2, \quad \forall \  |\Tilde{\chi}_i|\geq\frac{d}{\alpha^{vp}\Upsilon^{vp}}\max_{j=0,...,i-1}|\Tilde{\chi}_j|.
    \end{align}
    The condition in (\ref{eq:dot_Wi_VP_3}) satisfies the ISS requirements. Consequently, the inequality in (\ref{eq:ISS_platoon_property_VP}) is verified according to \cite[Theorem~4.19]{B_khalil_2002}. Moreover, 
    \begin{equation}\label{eq:gammatilde_VP}
        \gamma^{vp}(s) = \Tilde{\gamma}^{vp}s\ \  \forall \  s \geq 0, \ \ \  \Tilde{\gamma}^{vp} = \sqrt{\frac{\Bar{\alpha}^{vp}}{\underline{\alpha}^{vp}}}\frac{d}{\alpha^{vp}\Upsilon^{vp}} > 0.
    \end{equation}
    Since the parameters $a,b \geq 0$ in the dynamics of $\rho_i$ in (\ref{eq:rho_dynamic_system}) can be arbitrarily selected, the constant $d$ defined in (\ref{eq:constant_definition_CP}) can be chosen such that $\Tilde{\gamma}^{vp}$ in (\ref{eq:gammatilde_VP}) belongs to $(0,1)$.
    
    On the basis of Theorem \ref{thm:StringStability}, Asymptotic String Stability of the platoon can be obtained by using an appropriately chosen function describing macroscopic information.
    
\end{proof}

\end{document}